\documentclass[11pt]{article} \usepackage[normal]{optional}
\usepackage{amsmath}
\usepackage{amssymb}
\usepackage{ifpdf}
\usepackage{cite}
\usepackage{pageno}
\usepackage{comment}
\usepackage{import}
\usepackage{caption}
\usepackage{subcaption}

\usepackage[text={6.5in,9in},centering]{geometry}
\geometry{letterpaper}                   

\opt{normal}{\usepackage{commands-tam,numbering-tam}}
\opt{submission}{
    \usepackage{commands-tam}
    \numberwithin{equation}{section}
    \numberwithin{theorem}{section}
    \spnewtheorem{cor}[theorem]{Corollary}{\bfseries}{\itshape}
    \spnewtheorem{lem}[theorem]{Lemma}{\bfseries}{\itshape}
    \spnewtheorem{prop}[theorem]{Proposition}{\bfseries}{\itshape}
    \spnewtheorem{obs}[theorem]{Observation}{\bfseries}{\itshape}
}

\renewcommand{\vec}[1]{\mathbf{#1}}

\newcommand\addtag{\refstepcounter{equation}\tag{\theequation}}


\usepackage{url}
\usepackage{amsfonts}
\opt{normal}{\usepackage{amsthm}}
\opt{normal}{\usepackage{fullpage}}

\opt{submission,final}{}
\opt{normal}{}

\renewcommand{\exp}[1]{\mathrm{E}\left[ #1 \right]}

\newcommand{\ignore}[1]{}

\usepackage[normalem]{ulem}

\newcommand{\calE}{\mathcal{E}}

\newcommand{\hf}{\hat{f}}
\newcommand{\hr}{\hat{r}}
\newcommand{\hatt}{\hat{t}}
\newcommand{\bT}{\mathbf{T}}

\newcommand{\bD}{\mathbf{D}}
\newcommand{\bW}{\mathbf{W}}
\newcommand{\bU}{\mathbf{U}}

\newcommand{\bF}{\mathbf{F}}
\newcommand{\bR}{\mathbf{R}}
\newcommand{\bP}{\mathbf{P}}
\newcommand{\bX}{\mathbf{X}}

\newcommand{\vc}{\vec{c}}
\newcommand{\vr}{\vec{r}}

\newcommand{\vp}{\vec{p}}
\newcommand{\vi}{\vec{i}}

\newcommand{\bfr}{\mathbf{r}}
\newcommand{\bfp}{\mathbf{p}}
\renewcommand{\Pr}{\mathsf{Pr}}

\renewcommand{\exp}{\mathrm{exp}}

\def\longrightharpoonup{\relbar\joinrel\rightharpoonup}
\def\longleftharpoondown{\leftharpoondown\joinrel\relbar}
\def\longrightleftharpoons{\mathop{\vcenter{\hbox{\ooalign{\raise1pt\hbox{$\longrightharpoonup\joinrel$}\crcr\lower1pt\hbox{$\longleftharpoondown\joinrel$}}}}}}

\ifpdf

  \usepackage[pdftex]{epsfig}
  \usepackage[pdfstartview={XYZ null null 1.00}, pdfpagemode=None,colorlinks=true,citecolor=blue,linkcolor=blue]{hyperref}

\else
    \usepackage[dvips]{epsfig}
\usepackage[dvips,colorlinks=true,citecolor=blue,linkcolor=blue]{hyperref}

\fi

\vfuzz2pt 
\hfuzz2pt 

    \setcounter{topnumber}{2}
    \setcounter{bottomnumber}{2}
    \setcounter{totalnumber}{4}     
    \setcounter{dbltopnumber}{2}    

\begin{document}

\title{Timing in chemical reaction networks
}

\opt{normal}{
    \author{
        David Doty\thanks{California Institute of Technology, Pasadena, CA, USA, {\tt ddoty@caltech.edu}. The author was supported by the Molecular Programming Project under NSF grant 0832824, a Computing Innovation Fellowship under NSF grant 1019343 and by NSF grants CCF-1219274 and CCF-1162589.}
    }
    \date{}
}

\maketitle

\begin{abstract}
  Chemical reaction networks (CRNs) formally model chemistry in a well-mixed solution.
  CRNs are widely used to describe information processing occurring in natural cellular regulatory networks, and with upcoming advances in synthetic biology, CRNs are a promising programming language for the design of artificial molecular control circuitry.
  Due to a formal equivalence between CRNs and a model of distributed computing known as \emph{population protocols}, results transfer readily between the two models.

  We show that if a CRN respects \emph{finite density} (at most $O(n)$ additional molecules can be produced from $n$ initial molecules), then starting from any \emph{dense} initial configuration (all molecular species initially present have initial count $\Omega(n)$, where $n$ is the initial molecular count and volume), then \emph{every} producible species is produced in constant time with high probability.

  This implies that no CRN obeying the stated constraints can function as a \emph{timer}, able to produce a molecule, but doing so only after a time that is an unbounded function of the input size.
  This has consequences regarding an open question of Angluin, Aspnes, and Eisenstat concerning the ability of population protocols to perform fast, reliable leader election and to simulate arbitrary algorithms from a uniform initial state.
\end{abstract}





\section{Introduction}
\label{sec-intro}

\subsection{Background of the field}

The engineering of complex artificial molecular systems will require a sophisticated understanding of how to \emph{program chemistry}.
A natural language for describing the interactions of molecular species in a well-mixed solution is that of (finite) chemical reaction networks (CRNs), i.e., finite sets of chemical reactions such as $A + B \to A + C$.
When the behavior of individual molecules is modeled, CRNs are assigned semantics through \emph{stochastic chemical kinetics}~\cite{Gillespie77}, in which reactions occur probabilistically with rate proportional to the product of the molecular count of their reactants and inversely proportional to the volume of the reaction vessel.
The kinetic model of CRNs is based on the physical assumption of well-mixedness valid in a dilute solution.
Thus, we assume the \emph{finite density constraint}, which stipulates that a volume required to execute a CRN must be proportional to the maximum molecular count obtained during execution~\cite{SolCooWinBru08}.
In other words, the total concentration (molecular count per volume) is bounded.
This realistically constrains the speed of the computation achievable by CRNs.

Traditionally CRNs have been used as a descriptive language to analyze naturally occurring chemical reactions, as well as numerous other systems with a large number of interacting components such as gene regulatory networks and animal populations. 
However, recent investigations have viewed CRNs as a programming language for engineering artificial systems.
These works have shown CRNs to have eclectic algorithmic abilities.
Researchers have investigated the power of CRNs to simulate Boolean circuits~\cite{magnasco1997chemical}, neural networks~\cite{hjelmfelt1991chemical}, and digital signal processing~\cite{riedelDSP2012}.
Other work has shown that bounded-space Turing machines can be simulated with an arbitrarily small, non-zero probability of error by a CRN with only a polynomial slowdown \cite{angluin2006fast}.\footnote{This is surprising since finite CRNs necessarily must represent binary data strings in a unary encoding, since they lack positional information to tell the difference between two molecules of the same species.}
Space- and energy-efficient simulation of space-bounded Turing machines can be done by CRNs implementable by logically and thermodynamically reversible DNA strand displacement reactions, and as a consequence it is $\PSPACE$-hard to predict whether a particular species is producible~\cite{ThaCon12}.
Even Turing-universal computation is possible with an arbitrarily small probability of error over all time~\cite{SolCooWinBru08}.
The computational power of CRNs also provides insight on why it can be computationally difficult to simulate them~\cite{soloveichik2009robust},
and why certain questions are frustratingly difficult to answer (e.g.\ undecidable)\cite{CooSolWinBru09,zavattaro2008termination}.
The programming approach to CRNs has also, in turn, resulted in novel insights regarding natural cellular regulatory networks~\cite{cardelli2012cell}.


\subsection{Focus of this paper}
Informally, our main theorem shows that
CRNs ``respecting finite density'' (meaning that the total molecular count obtainable is bounded by a constant times the initial molecular count)
starting from any \emph{dense} initial configuration (meaning that any species present initially has count $\Omega(n)$, where $n$ is the initial molecular count and volume) will create $\Omega(n)$ copies of \emph{every} producible species in constant time with high probability.
We now explain the significance of this result in the context of an open question regarding the ability of stochastic CRNs to do fast, reliable computation.

Many of the fundamental algorithmic abilities and limitations of CRNs were elucidated under the guise of a related model of distributed computing known as \emph{population protocols}.
Population protocols were introduced by Angluin, Aspnes, Diamadi, Fischer, and Peralta~\cite{angluin2006passivelymobile} as a model of resource-limited mobile sensors; see Aspnes and Ruppert~\cite{aspnes2007introduction} for an excellent survey of the model.
A population protocol consists of a set of $n$ agents with finite state set $\Lambda$ (where we imagine $n \gg |\Lambda|$), together with a transition function
$\delta:\Lambda^2\to\Lambda^2$,
with $(r_1,r_2)=\delta(q_1,q_2)$ indicating that if two agents in states $q_1$ and $q_2$ interact, then they change to states $r_1$ and $r_2$, respectively.
Multiple semantic models may be overlaid on this syntactic definition, but the randomized model studied by Angluin, Aspnes, and Eisenstat~\cite{angluin2006fast}, in which the next two agents to interact are selected uniformly at random, coincides precisely with the model of stochastic chemical kinetics, so long as the CRN's reactions each have two reactants and two products,\footnote{This turns out not to be a significant restriction on CRNs; see the introduction of~\cite{CheDotSol12} for a discussion of the issue.} i.e., $q_1 + q_2 \to r_1 + r_2$.
In fact, every population protocol, when interpreted as a CRN, respects finite density, because the total molecular count is constant over time.
Therefore, although we state our results in the language of CRNs, the results apply \emph{a fortiori} to population protocols.

Angluin, Aspnes, and Eisenstat~\cite{angluin2006fast}, and independently Soloveichik, Cook, Winfree, and Bruck~\cite{SolCooWinBru08} showed that arbitrary Turing machines may be simulated by randomized population protocols/CRNs with only a polynomial-time slowdown.
Because a binary input $x \in \{0,1\}^*$ to a Turing machine must be specified in a CRN by a unary molecular count $n \in \N$, where $n \approx 2^{|x|}$, this means that each construction is able to simulate each step of the Turing machine in time $O(\mathrm{polylog}(n))$, i.e. polynomial in $|x|$.
Both constructions make essential use of the ability to specify \emph{arbitrary} initial configurations, where a configuration is a vector $\vc \in \N^\Lambda$ specifying the initial count $\vc(S)$ of each species $S \in \Lambda$.
In particular, both constructions require a certain species to be present with initial count 1, a so-called ``leader'' (hence the title of the former paper).

A test tube with a single copy of a certain type of molecule is experimentally difficult to prepare.
The major open question of~\cite{angluin2006fast} asks whether this restriction may be removed: whether there is a CRN starting from a \emph{uniform} initial configuration (i.e., with only a single species present, whose count is equal to the volume $n$) that can simulate a Turing machine.

Angluin, Aspnes, and Eisenstat~\cite{angluin2006fast} observe that a leader may be elected from a uniform initial configuration of $n$ copies of $L$ by the reaction $L + L \to L + N$; in other words, when two candidate leaders encounter each other, one drops out.
However, this scheme has problems:
\begin{enumerate}
  \item The expected time until a single leader remains is $\Theta(n)$, i.e., exponential in $m$, where $m \approx \log n$ is the number of bits required to represent the input $n$.
      In contrast, if a leader is assumed present from the start, the computation takes time $O(t(\log n) \cdot \log^5 n)$, where $t(m)$ is the running time of the Turing machine on an $m$-bit input, i.e., the time is polynomial in $t$.
      Therefore for polynomial-time computations, electing a leader in this manner incurs an exponential slowdown.

  \item There is no way for the leader to ``know'' when it has been elected. Therefore, even if a single leader is eventually elected, and even if the CRN works properly under the assumption of a single initial leader, prior to the conclusion of the election, multiple leaders will be competing and introducing unwanted behavior.
\end{enumerate}

These problems motivate the need for a \emph{timer} CRN, a CRN that, from a uniform initial configuration of size $n$, is able to produce some species $S$, but the time before the first copy of $S$ is produced is $t(n)$, for some unbounded function $t:\N\to\N$.
If a CRN exists that can produce a copy of $S$ after $\Omega(n)$ time, then by adding the reaction $S + L \to L_\mathrm{active}$, a leader election can take place among ``inactive'' molecules of type $L$, and the remaining species of the leader-driven CRNs of~\cite{angluin2006fast} or~\cite{SolCooWinBru08} can be made to work only with the ``active'' species $L_\mathrm{active}$, which will not appear (with high probability) until there is only a single copy of $L$ to be activated.
This shows how a timer CRN could alleviate the second problem mentioned above.
In fact, even the first problem, the slowness of the na\"{i}ve leader election algorithm, is obviated in the presence of a timer CRN.
If a timer CRN produces $S$ after $t(n) = \Omega(\log n)$ time from a uniform initial configuration of $n$ copies of $X$, then this can be used to construct a CRN that elects a leader in time $O(t(n))$~\cite{ConDotTha13}.
In other words, the problem of constructing a timer CRN is ``leader-election-hard''.

Unfortunately, timer CRNs cannot be constructed under realistic conditions.
The main theorem of this paper, stated informally, shows the following (the formal statement is Theorem~\ref{thm-no-timer} in Section~\ref{sec-timer}):
Let $\calC$ be a CRN with volume and initial molecular count $n$ that \emph{respects finite density} (no sequence of reactions can produce more than $O(n)$ copies of any species) with an initial configuration that is \emph{dense} (all species initially present have initial count $\Omega(n)$).
Then with probability $\geq 1 - 2^{- \Omega(n)}$, every producible species is produced in time $O(1)$.
Since a uniform initial configuration is dense, timer CRNs that respect finite density cannot be constructed.
It should be noted that the condition of respecting finite density is physically realistic; it is obeyed, for instance, by every CRN that obeys the law of conservation of mass.
As noted previously, all population protocols respect finite density, which implies that no population protocol can function as a timer.

This theorem has the following consequences for CRNs respecting finite density that start from a dense initial configuration with $n$ molecules:
\begin{enumerate}
  \item
    No such CRN can decide a predicate ``monotonically''.
    By this we mean that if the CRN solves a decision problem (a yes/no question about an input encoded in its initial configuration) by producing species $Y$ if the answer is yes and species $N$ if the answer is no, then the CRN cannot be guaranteed to produce \emph{only} the correct species.
    It necessarily must produce \emph{both} but eventually dispose of the incorrect species.
    This implies that composing such a CRN with a downstream CRN that uses the answer necessarily requires the downstream CRN to account for the possibility that the upstream CRN will ``change its mind'' before converging on the correct answer.

  \item
    No CRN leader election algorithm can ``avoid war'': any CRN that elects a unique leader (in any amount of time) must necessarily have at least 2 (in fact, at least $\Omega(n)$) copies of the leader species for some time before the unique leader is elected.
    This implies that any downstream CRN requiring a leader must be designed to work in the face of multiple leaders being present simultaneously for some amount of time before a unique leader is finally elected.
\end{enumerate}

Another line of research that has illustrated a striking difference between ``chemical computation'' and standard computational models has been the role of randomness.
The model of stochastic chemical kinetics is one way to impose semantics on CRNs, but another reasonable alternative is to require that all possible sequences of reactions -- and not just those likely to occur by the model of chemical kinetics -- should allow the CRN to carry out a correct computation.
Under this more restrictive deterministic model of CRN computation, Angluin, Aspnes, and Eisenstat~\cite{AngluinAE2006semilinear} showed that precisely the \emph{semilinear predicates} $\phi:\N^k \to \{0,1\}$ (those definable in the first-order theory of Presburger arithmetic) are deterministically computed by CRNs.
This was subsequently extended to \emph{function} computation by Chen, Doty, and Soloveichik~\cite{CheDotSol12}, who showed that precisely the functions $f:\N^k\to\N^l$ whose graph is a semilinear set are deterministically computed by CRNs.
Since semilinear sets are computationally very weak (one characterization is that they are finite unions of ``periodic'' sets, suitably generalizing the definition of periodic to multi-dimensional spaces), this shows that introducing randomization adds massive computational ability to CRNs.

This strongly contrasts other computational models.
Finite automata decide the regular languages whether they are required to be deterministic or randomized (or even nondeterministic).
Turing machines decide the decidable languages whether they are required to be deterministic or randomized (or even nondeterministic).
In the case of polynomial-time Turing machines, it is widely conjectured~\cite{NisanWigderson94} that $\P=\BPP$, i.e., that randomization adds at most a polynomial speedup to any predicate computation.
Since it is known that $\P \subseteq \BPP \subseteq \EXP$, even the potential exponential gap between deterministic and randomized polynomial-time computation is dwarfed by the gap between deterministic CRN computation (semilinear sets, which are all decidable in linear time by a Turing machine) and randomized CRN computation (which can decide any decidable problem).

Along this line of thinking (``What power does the stochastic CRN model have over the deterministic model?''), our main theorem may also be considered a stochastic extension of work on deterministic CRNs by Condon, Hu, Ma\v{n}uch, and Thachuk~\cite{CondonHMT12} and Condon, Kirkpatrick, and Ma\v{n}uch~\cite{condon2012reachability}.
Both of those papers showed results of the following form: every CRN in a certain restricted class (the class being different in each paper, but in each case is a subset of the class of CRNs respecting finite density) with $k$ species and initial configuration $\vi \in \N^k$ has the property that every producible species is producible from initial configuration $m \cdot \vi$ through at most $t$ \emph{reactions}, where $m$ and $t$ are bounded by a polynomial in $k$.
In other words, if the goal of the CRN is to delay the production of some species until $\omega(\poly(k))$ reactions have occurred (such as the ``Grey code counter'' CRN of~\cite{CondonHMT12}, which iterates through $2^k$ states before producing the first copy of a certain species), then the CRN cannot be \emph{multi-copy tolerant}, since $m$ copies of the CRN running in parallel can ``short-circuit'' and produce every species in $\poly(k)$ reactions, if the reactions are carefully chosen.
Our main theorem extends these deterministic impossibility results to the stochastic model, showing that not only is there a short sequence of reactions that will produce every species,
but furthermore that the laws of chemical kinetics will force such a sequence to actually happen in constant time with high probability.

The paper is organized as follows.
Section~\ref{sec-prelim} defines the model of stochastic chemical reaction networks.
Section~\ref{sec-timer} states and proves the main theorem, Theorem~\ref{thm-no-timer}, that every CRN respecting finite density, starting from a dense initial configuration, likely produces every producible species in constant time.
Appendix~\ref{sec-exponential-decay} proves a Chernoff bound on continuous-time stochastic exponential decay processes, Lemma~\ref{lem-exp-decay-chernoff}, used in the proof of Theorem~\ref{thm-no-timer}.
Appendix~\ref{sec-random-walks} proves a Chernoff bound on continuous-time biased random walks, Lemma~\ref{lem-random-walk-f-r}, a messy-to-state consequence of which (Lemma~\ref{lem-random-walk-reflecting}) is used in the proof of Theorem~\ref{thm-no-timer}.
Section~\ref{sec-conclusion} discusses questions for future work.


\section{Preliminaries}
\label{sec-prelim}


\subsection{Chemical reaction networks}

If $\Lambda$ is a finite set of chemical species,
we write $\N^\Lambda$ to denote the set of functions $f:\Lambda \to \N$.
Equivalently, we view an element $\vc\in\N^\Lambda$ as a vector of $|\Lambda|$ nonnegative integers, with each coordinate ``labeled'' by an element of $\Lambda$.
Given $S\in \Lambda$ and $\vc \in \N^\Lambda$, we refer to $\vc(S)$ as the \emph{count of $S$ in $\vc$}.
Let $\|\vc\| = \|\vc\|_1 = \sum_{S \in \Lambda} \vc(S)$ represent the total count of species in $\vc$.
We write $\vc \leq \vc'$ to denote that $\vc(S) \leq \vc'(S)$ for all $S \in \Lambda$.
Given $\vc,\vc' \in \N^\Lambda$, we define the vector component-wise operations of addition $\vc+\vc'$, subtraction $\vc-\vc'$, and scalar multiplication $n \vc$ for $n \in \N$.
If $\Delta \subset \Lambda$, we view a vector $\vc \in \N^\Delta$ equivalently as a vector $\vc \in \N^\Lambda$ by assuming $\vc(S)=0$ for all $S \in \Lambda \setminus \Delta.$
For all $\vc \in \N^\Lambda$, let $[\vc] = \setr{S \in \Lambda}{\vc(S) > 0}$ be the set of species with positive counts in $\vc$.

Given a finite set of chemical species $\Lambda$, a \emph{reaction} over $\Lambda$ is a triple $\beta = ( \vr,\vp,k ) \in \N^\Lambda \times \N^\Lambda \times \R^+$, specifying the stoichiometry of the reactants and products, respectively, and the \emph{rate constant} $k$.
For instance, given $\Lambda=\{A,B,C\}$, the reaction $A+2B \overset{4.7}{\to} A+3C$ is the triple $((1,2,0),(1,0,3),4.7).$
A \emph{(finite) chemical reaction network (CRN)} is a pair $\calC=(\Lambda,R)$, where $\Lambda$ is a finite set of chemical \emph{species}, and $R$ is a finite set of reactions over $\Lambda$.
A \emph{configuration} of a CRN $\calC=(\Lambda,R)$ is a vector $\vc \in \N^\Lambda$.
When the configuration $\vc$ is clear from context, we write $\# X$ to denote $\vc(X)$. 

Given a configuration $\vc$ and reaction $\beta=(\vr,\vp,k)$, we say that $\beta$ is \emph{applicable} to $\vc$ if $\bfr \leq \vc$ (i.e., $\vc$ contains enough of each of the reactants for the reaction to occur).
If $\beta$ is applicable to $\vc$, then write $\beta(\vc)$ to denote the configuration $\vc + \bfp - \bfr$ (i.e., the configuration that results from applying reaction $\beta$ to $\vc$).
If $\vc'=\beta(\vc)$ for some reaction $\beta \in R$, we write $\vc \to_\calC \vc'$, or merely $\vc \to \vc'$ when $\calC$ is clear from context.
An \emph{execution} (a.k.a., \emph{execution sequence}) $\calE$ is a finite or infinite sequence of one or more configurations $\calE = (\vc_0, \vc_1, \vc_2, \ldots)$ such that, for all $i \in \{1,\ldots,|\calE|\}$, $\vc_{i-1} \to \vc_{i}$.
If a finite execution sequence starts with $\vc$ and ends with $\vc'$, we write $\vc \to_\calC^* \vc'$, or merely $\vc \to^* \vc'$ when the CRN $\calC$ is clear from context.
In this case, we say that $\vc'$ is \emph{reachable} from $\vc$.

We say that a reaction $\beta = ( \vr,\vp,k )$ \emph{produces} $S \in \Lambda$ if $\vp(S) - \vr(S) > 0$, and that $\beta$ \emph{consumes} $S$ if $\vr(S) - \vp(S) > 0$.
When a CRN $\calC=(\Lambda,R)$ and an initial configuration $\vi \in \N^\Lambda$ are clear from context, we say a species $S \in \Lambda$ is \emph{producible} if there exists $\vc$ such that $\vi \to_\calC^* \vc$ and $\vc(S)>0$.

\subsection{Kinetic model}

The following model of stochastic chemical kinetics is widely used in quantitative biology and other fields dealing with chemical reactions in which stochastic effects are significant~\cite{Gillespie77}.
It ascribes probabilities to execution sequences, and also defines the time of reactions, allowing us to study the running time of CRNs in Section~\ref{sec-timer}.

A reaction is \emph{unimolecular} if it has one reactant and \emph{bimolecular} if it has two reactants.
We assume no higher-order reactions occur.\footnote{This assumption is not critical to the proof of the main theorem; the bounds derived on reaction rates hold asymptotically even higher-order reactions are permitted, although the constants would change.} 
The kinetic behavior of a CRN is described by a continuous-time Markov process, whose states are configurations of the CRN, as follows.
Given a fixed volume $v > 0$ and current configuration $\vc$, the \emph{propensity} of a unimolecular reaction $\beta : X \overset{k}{\to} \ldots$ in configuration $\vc$ is $\rho(\vc, \beta) = k \vc(X)$.
The propensity of a bimolecular reaction $\beta : X + Y \overset{k}{\to} \ldots$, where $X \neq Y$, is $\rho(\vc, \beta) = \frac{k}{v} \vc(X) \vc(Y)$.
The propensity of a bimolecular reaction $\beta : X + X \overset{k}{\to} \ldots$ is $\rho(\vc, \beta) = \frac{k}{v} \frac{\vc(X) (\vc(X) - 1)}{2}$.\footnote{Intuitively, $\rho(\vc,\beta)$ without the $k$ and $v$ terms counts the number of ways that reactants can collide in order to react (with unimolecular reactants ``colliding'' with only themselves).}
The propensity function determines the evolution of the system as follows.
The time until the next reaction occurs is an exponential random variable with rate $\rho(\vc) = \sum_{\beta \in R} \rho(\vc,\beta)$. 
The probability that next reaction will be a particular $\beta_{\text{next}}$ is $\frac{\rho(\vc,\beta_{\text{next}})}{\rho(\vc)}$.


\section{CRNs that produce all species in constant time}
\label{sec-timer}
We say a CRN \emph{respects finite density} if there is a constant $\hat{c}$ such that, for any initial configuration $\vi$ and any configuration $\vc$ such that $\vi \to^* \vc$, $\|\vc\| \leq \hat{c} \|\vi\|$.
That is, the maximum molecular count attainable is bounded by a constant multiplicative factor of the initial counts, implying that if the volume is sufficiently large to contain the initial molecules, then within a constant factor, it is sufficiently large to contain all the molecules ever produced.
We therefore safely assume that for any CRN respecting finite density with initial configuration $\vi$, the volume is $\|\vi\|$.

CRNs respecting finite density constitute a wide class of CRNs that includes, for instance, all mass-conserving CRNs (CRNs for which there exists a mass function $m:\Lambda\to\R^+$ such that $\sum_{S\in\vr} m(S) = \sum_{S\in\vp} m(S)$ for all reactions $( \vr,\vp,k )$).

In this section we prove that no CRN respecting finite density, starting from an initial configuration with ``large'' species counts, can delay the production of any producible species for more than a constant amount of time.
All producible species are produced in time $O(1)$ with high probability.


Let $\alpha > 0$.
We say that a configuration $\vc$ is \emph{$\alpha$-dense} if for all $S \in \Lambda$, $S \in [\vc] \implies \vc(S) \geq \alpha \|\vc\|$.
Given initial configuration $\vi \in \N^\Lambda$, let $\Lambda^*_\vi = \setr{S \in \Lambda}{(\exists \vc)\ \vi \to^* \vc \text{ and } \vc(S) > 0}$ denote the set of species producible from $\vi$.
For all $t > 0$ and $S \in \Lambda$, let $\#_t S$ be the random
  variable representing the count of $S$ after $t$ seconds, if $\vi$ is the initial configuration at time 0. 

The following is the main theorem of this paper.

  \newcommand{\cmax}{\hat{c}}
  \newcommand{\kmax}{\hat{K}}
  \newcommand{\kmin}{\hat{k}}

\begin{thm}\label{thm-no-timer}
  Let $\alpha > 0$ and $\calC = (\Lambda,R)$ be a CRN respecting finite density.
  Then there are constants $\epsilon,\delta,t > 0$ such that, for all sufficiently large $n$ (how large depending only on $\alpha$ and $\calC$), for all $\alpha$-dense initial configurations $\vi$ with $||\vi|| = n$,
  $
    \Pr[(\forall S\in\Lambda^*_\vi)\ \#_t S \geq \delta n] \geq 1 - 2^{-\epsilon n}.
  $
\end{thm}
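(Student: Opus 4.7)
My plan is to induct on a ``production depth'' of species. Given the initial configuration $\vi$, set $d(S) = 0$ for $S \in [\vi]$, and recursively $d(S) = 1 + \min_\beta \max_{T \in \text{reactants}(\beta)} d(T)$, where the minimum is over reactions $\beta$ that produce $S$ and every reactant of $\beta$ has already been assigned a depth. Because the set of producible species stabilizes in at most $|\Lambda|$ stages of ``apply all reactions whose reactants are already present'', every producible species has depth at most $|\Lambda| = O(1)$. I will prove by induction on $d$ that there exist constants $\alpha_d, t_d, \epsilon_d > 0$ (depending only on $\calC$ and $\alpha$) such that, with probability $\geq 1 - 2^{-\epsilon_d n}$, for every real time $s \in [t_d, t]$ and every producible $S$ with $d(S) \leq d$, one has $\#_s S \geq \alpha_d n$. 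Taking $d = |\Lambda|$ and setting $t := t_{|\Lambda|}$, $\delta := \alpha_{|\Lambda|}$, $\epsilon := \min_d \epsilon_d / 2$ then yields the theorem after a union bound over $O(1)$ failure events.

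\textbf{Persistence of dense species.}
A key subroutine, used both for the base case and to carry a depth-$d$ lower bound forward in time, is the following: if $\#_s X \geq \alpha' n$ at some time $s$, then over a constant interval $[s, s+\tau]$ the count stays at $\Omega(n)$ with probability $1 - 2^{-\Omega(n)}$. This holds because the total propensity of reactions consuming $X$ is at most $\hat{K}\hat{c}\,\#X$, where $\hat{K}$ is an upper bound on rate constants and $\hat{c}$ is the finite-density constant (every other species has count at most $\hat{c} n$ while the volume is $n$). Thus $\#X$ is stochastically dominated below by a continuous-time exponential-decay process of rate $O(1)$, and Lemma~\ref{lem-exp-decay-chernoff} yields that $\#_{s+\tau}X \geq (\alpha'/2) e^{-O(\tau)} n = \Omega(n)$ throughout the window with the claimed probability. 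Applied to the initial species this gives the base case $d = 0$; applied at each inductive step it lets me keep all lower-depth species at $\Omega(n)$ while I grow the next layer.

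\textbf{Inductive step.}
Fix $S$ with $d(S) = d+1$ and a witnessing reaction $\beta$ with reactants $R_1$ (and $R_2$) at depth $\leq d$. Conditioning on the high-probability event from the inductive hypothesis, each $\#R_i \geq \alpha_d n$ throughout $[t_d, t]$, so the propensity of $\beta$ is at least $(k_\beta/n)(\alpha_d n)^2 = \Omega(n)$ (the unimolecular case is analogous with rate $\Omega(n)$ as well). At the same time the total consumption rate of $S$ is $O(\#S)$ by the finite-density bound above. Choose $\alpha_{d+1}$ small enough that at $\#S = \alpha_{d+1} n$ the forward rate exceeds the backward rate by a definite constant factor. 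Then $\#S$ is dominated below by a biased continuous-time walk whose forward rate is $\Omega(n)$ and whose backward rate is linearly bounded in its own value, reflecting at $0$. Lemma~\ref{lem-random-walk-reflecting} (via Lemma~\ref{lem-random-walk-f-r}) then shows that within an additional constant time $\Delta_{d+1}$, $\#S$ crosses $\alpha_{d+1} n$ and, combined with the persistence argument, remains above it through time $t$ with probability $1 - 2^{-\Omega(n)}$. Setting $t_{d+1} := t_d + \Delta_{d+1}$ and union-bounding over the finitely many depth-$(d+1)$ species closes the induction.

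\textbf{Main obstacle.}
The chief technical obstacle is that $\#S$ is not a stand-alone biased walk: several reactions involving $S$ and its reactants may fire concurrently, and the effective forward and backward rates depend on the full stochastic state. I must therefore use Lemma~\ref{lem-random-walk-reflecting} in a domination form that gives a tail bound for any process whose instantaneous forward rate is at least some $f = \Omega(n)$ and whose backward rate is at most $r(\#S)$ with $r$ linear, regardless of adversarial coupling. A secondary subtlety is that each inductive step conditions on the previous step's high-probability event, so I must verify that the stated $2^{-\Omega(n)}$ bound is a \emph{conditional} bound; since only $O(1)$ conditioning events arise, the total failure probability remains $2^{-\Omega(n)}$. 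Finally, since each $\alpha_d$ shrinks only by a bounded multiplicative factor per depth and there are $|\Lambda| = O(1)$ levels, the final constants $\delta, \epsilon > 0$ are genuine, yielding Theorem~\ref{thm-no-timer}.
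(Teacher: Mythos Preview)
Your proposal is correct and follows essentially the same approach as the paper: layer the producible species by production depth, use an exponential-decay domination (the paper's Lemma~\ref{lem-exp-decay-chernoff}) to show each layer persists at $\Omega(n)$ over a fixed window, and use the reflecting biased-walk bound (the paper's Lemma~\ref{lem-random-walk-reflecting}) to show each new layer climbs to $\Omega(n)$ in constant time, then union-bound over the $O(|\Lambda|)$ stages. One small inaccuracy: because the forward propensity in the bimolecular case is $\Theta(\alpha_d^2 n)$, the threshold $\alpha_{d+1}$ must be chosen $\Theta(\alpha_d^2)$ rather than a fixed multiple of $\alpha_d$, so the $\alpha_d$ shrink doubly-exponentially in $d$ (as the paper makes explicit); this does not affect your conclusion since $d \leq |\Lambda| = O(1)$.
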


\begin{proof}
  The handwaving intuition of the proof is as follows.
  We first show that every species initially present, because they have count $\Omega(n)$ and decay at an at most exponential rate, remain at count $\Omega(n)$ (for some smaller constant fraction of $n$) for a constant amount of time.
  Because of this, during this entire time, the reactions for which they are reactants are running at rate $\Omega(n)$.
  Therefore the products of those reactions are produced quickly enough to get to count $\Omega(n)$ in time $O(1)$.
  Because \emph{those} species (the products) decay at an at most exponential rate, they have count $\Omega(n)$ for a constant amount of time.
  Therefore the reactions for which they are \emph{reactants} are running at rate $\Omega(n)$.
  Since there are a constant number of reactions, this will show that all producible species are produced in constant time with high probability.
  The tricky part is to show that although species may be consumed and go to lower counts, every species produced has count $\Omega(n)$ for $\Omega(1)$ time, sufficient for it to execute $\Omega(n)$ reactions of which it is a reactant and produce $\Omega(n)$ of the products of that reaction.

  On to the gritty details.
  Because $\calC$ respects finite density, the maximum count obtained by any species is $O(n)$.
  Let $\cmax$ be a constant such that all species have count at most $\cmax n$ in any configuration reachable from $\vi$.
  Let $\kmax = \sum_{(\vr,\vp,k) \in R} k$ be the sum of the rate constants of every reaction.
  Let $\kmin = \min_{(\vr,\vp,k) \in R} k$ be the minimum rate constant of any reaction.
  For any $\Delta \subseteq \Lambda$, define
  $$
    \mathrm{PROD}(\Delta) =  \setr{S \in \Lambda}{(\exists \beta=( \vr,\vp,k ) \in R)\ \beta \text{ produces $S$ and } [\vr] \subseteq \Delta}
  $$
  to be the set of species producible by a single reaction assuming that only species in $\Delta$ are present, and further assuming that those species have sufficient counts to execute the reaction.

  Define subsets $\Lambda_0 \subset \Lambda_1 \subset \ldots \subset \Lambda_{m-1} \subset \Lambda_m \subseteq \Lambda$ as follows.
  Let $\Lambda_0 = [\vi]$.
  For all $i \in \Z^+$, define
  $
    \Lambda_i = \Lambda_{i-1} \cup \mathrm{PROD}(\Lambda_{i-1}).
  $
  Let $m$ be the smallest integer such that $\mathrm{PROD}(\Lambda_m) \subseteq \Lambda_m$.
  Therefore for all $i \in \{1,\ldots,m\}$, $|\Lambda_i| > |\Lambda_{i-1}|$, whence $m < |\Lambda|$.

  By the hypothesis of the theorem, all $S \in \Lambda_0$ satisfy $\vi(S) \geq \alpha n$.
  $S$ may be produced and consumed.
  To establish that $\# S$ remains high for a constant amount of time, in the worst case we assume that $S$ is only consumed.
  Let $\lambda = \cmax \kmax$.
  We will assume that $\lambda \geq 1$ since we can choose $\cmax \geq \frac{1}{\kmax}$ if it is not already greater than $\frac{1}{\kmax}$.
  This assumption is required in Lemma~\ref{lem-random-walk-reflecting}, which is employed later in the proof.
  We also assume that $\cmax \geq 1$, implying $\lambda \geq \kmax$.

  For all $S \in \Lambda$, the rate of any unimolecular reaction consuming $S$ is at most $\kmax \# S \leq \lambda \# S$, and the rate of any bimolecular reaction consuming $S$ is at most $\frac{\kmax}{n} (\cmax n) \# S = \lambda \# S$, whence the expected time for any reaction consuming $S$ is at least $\frac{1}{\lambda \# S}$.
  We can thus upper bound the overall consumption of $S$ by an exponential decay process with rate $\lambda$, i.e., this process will consume copies of $S$ at least as quickly as the actual CRN consumes copies of $S$.

  Let $c = 4 e^{\lambda (m+1)}$. Let $\delta_0 = \frac{\alpha}{c}$.
  Fix a particular $S \in \Lambda_0$.
  Let $N = \alpha n$ and $\delta = \frac{1}{c}$ in Lemma~\ref{lem-exp-decay-chernoff}.
  Then by the fact that the rate of decay of $S$ is bounded by an exponential decay process with rate $\lambda$ and initial value $\alpha n$ (the process $\bD_\lambda^{N}$ as defined in Section~\ref{sec-exponential-decay}) and Lemma~\ref{lem-exp-decay-chernoff},
  \begin{align*}
    \Pr[(\exists t \in [0,m+1])\ \#_{t} S < \delta_0 n]
    \leq
    \Pr \left[ \bD_\lambda^{\alpha n}(m+1) < \frac{1}{c} \alpha n \right]
    <
    \left( 2 \frac{1}{c} e^{\lambda (m+1)} \right)^{\alpha n / c - 1}
    =
    2^{- \delta_0 n + 1}.
  \end{align*}
  By the union bound,
  \begin{equation}\label{ineq-Lambda0-high}
    \Pr[(\exists S \in \Lambda_0)(\exists t \in [0,m+1])\ \#_t S < \delta_0 n]
    <
    |\Lambda_0| 2^{- \delta_0 n + 1}.
  \end{equation}

  That is, with high probability, all species in $\Lambda_0$ have ``high'' count (at least $\delta_0 n$) for the entire first $m+1$ seconds.
  Call this event $H(\Lambda_0)$ (i.e., the complement of the event in \eqref{ineq-Lambda0-high}).

  We complete the proof by a ``probabilistic induction''\footnote{In other words, we show that if the induction hypothesis fails with low probability $p$, and if the induction step fails with low probability $q$, given that the induction hypothesis holds, then by the union bound, the induction step fails with probability at most $p+q$.} on $i \in \{0,1,\ldots,m\}$ as follows.
  Inductively assume that for all $X \in \Lambda_i$ and all $t \in [i,m+1]$, $\#_t X \geq \delta_i n$ for some $\delta_i > 0$.
  Call this event $H(\Lambda_i)$.
  Then we will show that for all $S \in \Lambda_{i+1}$ and all $t \in [i+1,m+1]$, assuming $H(\Lambda_i)$ is true, with high probability $\#_t S \geq \delta_{i+1} n$ for some $\delta_{i+1} > 0$.
  We will use Lemma~\ref{lem-random-walk-reflecting} to choose particular values for the $\delta_i$'s, and these values will not depend on $n$.

  The base case is established by~\eqref{ineq-Lambda0-high} for $\delta_0 = \frac{\alpha}{c}$.
  Fix a particular species $S \in \Lambda_{i+1}$.
  By the definition of $\Lambda_{i+1}$, it is produced by either at least one reaction of the form $X \to S + \ldots$ for some $X \in \Lambda_i$ or by $X+Y \to S + \ldots$ for some $X,Y \in \Lambda_i$.
  By the induction hypothesis $H(\Lambda_i)$, for all $t \in [i,m+1]$, $\#_t X \geq \delta_i n$ and $\#_t Y \geq \delta_i n$ for some $\delta_i > 0$.
  In the case of a unimolecular reaction, the propensity of this reaction is at least $\kmin \delta_i n$.
  In the case of a bimolecular reaction, the propensity is at least $\frac{\kmin}{n} (\delta_i n)^2 = \kmin \delta_i^2 n$ if $X \neq Y$ or $\frac{\kmin}{n} \frac{\delta_i n (\delta_i n - 1)}{2} = \frac{\kmin}{2}  (\delta_i^2 n - 1)$ if $X = Y$.

  The last of these three possibilities is the worst case (i.e., the smallest).
  Because $\delta_i$ is a constant independent of $n$, for sufficiently large $n$ ($n \geq 2 / \delta_i^2 $), $\delta_i^2 n - 1 \geq \delta_i^2 n / 2$, whence the calculated rate is at least $\frac{\kmin}{4}  \delta_i^2 n$.

  Let $\delta_f = \frac{\kmin \delta_i^2}{4}$ and $\delta_r = \frac{\delta_f}{4 \lambda}$.
  Similar to the argument above concerning the maximum rate of decay of any $S \in \Lambda_0$, the rate at which reactions consuming $S \in \Lambda_{i+1}$ occur is at most $\lambda \# S$.
  Also, by the above arguments, the minimum rate at which reactions producing $S$ proceed is at least $\delta_f n$.
  Therefore we may lower bound the \emph{net} production of $S$ (i.e., its total count) by a continuous-time random walk on $\N$ that starts at 0, has constant forward rate $\delta_f n$ from every state $i$ to $i+1$, and has reverse rate $\lambda i$ from $i$ to $i-1$, defined as the process $\bW^n_{\delta_f,\lambda}$ in Section~\ref{sec-random-walks}, Lemma~\ref{lem-random-walk-reflecting}.
  By this lower bound and Lemma~\ref{lem-random-walk-reflecting}\footnote{Although Lemma~\ref{lem-random-walk-reflecting} is stated for times between 0 and 1, here we use it for times between $i$ and $i+1$ (just shift all times down by $i$).} (recall that we have assumed $\lambda \geq 1$),
  \begin{align}
    \Pr \left[ \max_{\hatt \in [i,i+1]} \#_{\hatt} S < \delta_r n \right]
    \leq
    \Pr \left[ \max_{\hatt \in [0,1]} \bW^N_{\delta_f,\lambda}(\hatt) < \delta_r n \right] 
    <
    2^{-\delta_f n / 22 + 1}
    =
    2^{ - \kmin \delta_i^2 n / 88 + 1}.   \label{ineq-S-prod}
  \end{align}

  Therefore with high probability $\# S$ reaches count at least $\delta_r n$ at some time $\hatt \in [i,i+1]$.
  Let
  $
    \delta_{i+1}
    =
    \frac{\delta_r}{c}.
  $
  As before, we can bound the decay rate of $S$ from time $\hatt$ until time $m+1$ by an exponential decay process with rate $\lambda$ and initial value $\delta_r n$, proceeding for $m+1$ seconds (since $\hatt \geq 0$).
  By Lemma~\ref{lem-exp-decay-chernoff},
  \begin{align}
    \Pr[(\exists t \in [\hatt,m+1])\ \#_t S < \delta_{i+1} n]
    \leq
    \Pr \left[ \bD_\lambda^{\delta_r n}(m+1) < \frac{\lambda}{c} \delta_r n \right]
    <
    \left( 2 \frac{1}{c} e^{\lambda (m+1)} \right)^{\lambda \delta_r n / c - 1}
    =
    2^{- \delta_{i+1} n + 1}. \label{ineq-S-not-decay}
  \end{align}

  Fix a particular $S \in \Lambda_{i+1}$.
  By the union bound applied to~\eqref{ineq-S-prod} and~\eqref{ineq-S-not-decay} and the fact that $\hatt \leq i+1$, the probability that $\#_t S < \delta_{i+1} n$ at any time $t \in [i+1,m+1]$, given that the induction hypothesis $H(\Lambda_i)$ holds (i.e., given that $\#_t X \geq \delta_i n$ for all $X \in \Lambda_i$ and all $t \in [i,m+1]$), is at most $2^{ - \kmin \delta_i^2 n / 88 + 1} + 2^{- \delta_{i+1} n + 1}$.
  Define $H(\Lambda_{i+1})$ to be the event that this does not happen for \emph{any} $S \in \Lambda_{i+1}$, i.e., that $(\forall S \in \Lambda_{i+1}) (\forall t \in [i,m+1])\ \#_t S \geq \delta_{i+1} n$.
  By the union bound over all $S \in \Lambda_{i+1}$,
  \begin{equation} \label{ineq-Lambdai-high}
    \Pr \left[ \neg H(\Lambda_{i+1}) | H(\Lambda_i) \right]
    <
    |\Lambda_{i+1}| \left( 2^{ - \kmin \delta_i^2 n / 88 + 1} + 2^{- \delta_{i+1} n + 1} \right)
  \end{equation}

  By the union bound applied to~\eqref{ineq-Lambda0-high} and~\eqref{ineq-Lambdai-high}, and the fact that $\delta_{m} < \delta_{i}$ for all $i\in\{0,\ldots,m-1\}$, the probability that any step of the induction fails is at most
  \begin{align}
    \Pr[\neg H(\Lambda_0)]
    +
    \sum_{i=1}^m \Pr[\neg H(\Lambda_i) | H(\Lambda_{i-1})]
    &<
    |\Lambda_0| 2^{- \delta_0 n + 1}
    +
    \sum_{i=1}^m |\Lambda_i| \left( 2^{ - \kmin \delta_{i-1}^2 n / 88 + 1} + 2^{- \delta_i n + 1} \right) \nonumber
    \\&<
    |\Lambda| \left( 2^{- \delta_m n + 1} + 2^{ - \kmin \delta_{m}^2 n / 88 + 1} \right) 
    <
    |\Lambda| \left( 2^{ - \kmin \delta_{m}^2 n / 88 + 2} \right). \label{ineq-prob-fail}
  \end{align}

  At this point we are essentially done.
  The remainder of the proof justifies that the various constants involved can be combined into a single constant $\epsilon$ that does not depend on $n$ (although it depends on $\calC$ and $\alpha$) as in the statement of the theorem.

  By our choice of $\delta_{i+1}$, we have
  $$
    \delta_{i+1}
    =
    \frac{\delta_r}{c}
    =
    \frac{\delta_f}{4 \lambda c}
    =
    \frac{\delta_i^2 \kmin}{16 \lambda c}
    >
    \left( \frac{\delta_i \kmin}{16 \lambda c} \right)^2.
  $$

  Recall that $\delta_0 = \frac{\alpha}{c}$.
  Therefore, for all $i \in \{1,\ldots,m\}$,
  $
  \delta_{i}
  > \left( \frac{\alpha \kmin}{16 \lambda c^2} \right)^{2^{i}}.
  $
  So
  \begin{equation}\label{ineq-delta-m}
  \delta_m
  > \left( \frac{\alpha \kmin}{16 \lambda c^2} \right)^{2^m}
  \geq \left( \frac{\alpha \kmin}{16 \lambda c^2} \right)^{2^{|\Lambda|-1}}.
  \end{equation}
  Therefore, if $n > \left( \frac{\alpha \kmin}{16 \lambda c^2} \right)^{2^{|\Lambda|-1}}$, then by~\eqref{ineq-prob-fail} and~\eqref{ineq-delta-m}, the failure probability in~\eqref{ineq-prob-fail} is at most
  \begin{align*}
    |\Lambda| 2^{ - \kmin \delta_{m}^2 n / 88 + 2}
    &<
    |\Lambda| 2^{ - \kmin \left( \left( \frac{\alpha \kmin}{16 \lambda c^2} \right)^{2^{|\Lambda|-1}} \right)^2 n / 88 + 2}
    &&=
    |\Lambda| 2^{ - \kmin \left( \frac{\alpha \kmin}{16 \lambda c^2} \right)^{2^{|\Lambda|}} n / 88 + 2}
    \\=
    |\Lambda| 2^{ - \kmin \left( \frac{\alpha \kmin}{16 \lambda \left( 4 e^{\lambda (m+1)} \right)^2} \right)^{2^{|\Lambda|}} n / 88 + 2}
    &=
    |\Lambda| 2^{ - \kmin \left( \frac{\alpha \kmin}{256 \lambda e^{2 \lambda (m+1)}} \right)^{2^{|\Lambda|}} n / 88 + 2}
    &&\leq
    |\Lambda| 2^{ - \kmin \left( \frac{\alpha \kmin}{256 \kmax \cmax e^{2 \kmax \cmax |\Lambda|} } \right)^{2^{|\Lambda|}} n / 88 + 2}.
  \end{align*}

  Letting $\epsilon' = \frac{\kmin}{88} \left( \frac{\alpha \kmin}{256 \kmax \cmax e^{2 \kmax \cmax |\Lambda|}} \right)^{2^{|\Lambda|}}$ implies failure probability at most $|\Lambda| 2^{-\epsilon' n + 2}$.
  For $n \geq \frac{2 + 2 \log |\Lambda|}{\epsilon'}$, $|\Lambda| 2^{-\epsilon' n + 2} \leq 2^{-(\epsilon' / 2) n}$.
  Letting $t=m+1$, $\delta = \delta_m$, and $\epsilon = \epsilon' / 2$ completes the proof.
\end{proof}

We have chosen $t=m+1$ (i.e., $t$ depends on $\calC$); however, the same proof technique works if we choose $t=1$ (or any other constant independent of $\calC$), although this results in smaller values for $\delta$ and $\epsilon$.

Although the choice of $\epsilon$ is very small, the analysis used many very loose bounds for the sake of simplifying the argument.
A more careful analysis would show that a much larger value of $\epsilon$ could be chosen, but for our purposes it suffices that $\epsilon$ depends on $\calC$ and $\alpha$ but not on the volume $n$.

However, it does seem fundamental to the analysis that $\epsilon \leq \gamma^{2^{|\Lambda|}}$ for some $0 < \gamma < 1$.
Consider the CRN with $n$ initial copies of $X_1$ in volume $n$ and reactions
\[
  \begin{array}{rclrclcrcl}
      X_1 &\to& \emptyset, & X_2 &\to& \emptyset, &\ldots& X_{m} &\to& \emptyset, \\
      X_1 + X_1 &\to& X_2, & X_2 + X_2 &\to& X_3, &\ldots& X_{m} + X_{m} &\to& X_{m+1}
  \end{array}
\]
This is a CRN in which the number of stages $m$ is actually equal to its worst-case value $|\Lambda|-1$, and in which each species is being produced at the minimum rate possible -- by a bimolecular reaction -- and consumed at the fastest rate possible -- by a unimolecular reaction (in addition to a bimolecular reaction).
Therefore extremely large values of $n$ are required for the ``constant fraction of $n$'' counts in the proof to be large enough to work, i.e., for the production reactions to outrun the consumption reactions.
This appears to be confirmed in stochastic simulations as well.

\section{Conclusion}
\label{sec-conclusion}


The reason we restrict attention to CRNs respecting finite density is that the proof of Theorem~\ref{thm-no-timer} relies on bounding the rate of consumption of any species by an exponential decay process, i.e., no species $S$ is consumed faster than rate $\lambda \# S$, where $\lambda > 0$ is a constant.
This is not true if the CRN is allowed to violate finite density, because a reaction consuming $S$ such as $X + S \to X$ proceeds at rate $\# X \# S / n$ in volume $n$, and if $\# X$ is an arbitrarily large function of the volume, then for any constant $\lambda$, this rate will eventually exceed $\lambda \#S$.

In fact, for the proof to work, the CRN need not respect finite density for all time, but only for some constant time after $t=0$.
This requirement is fulfilled even if the CRN has non-mass-conserving reactions such as $X \to 2X$.
Although such a CRN eventually violates finite density, for all constant times $t>0$, there is a constant $c > 0$ such that $\#_t X \leq c \#_0 X$ with high probability.

The assumption that the CRN respects finite density is a perfectly realistic constraint satisfied by all real chemicals,\footnote{The reader may notice that recent work has proposed concrete chemical implementations of arbitrary CRNs using DNA strand displacement~\cite{SolSeeWin10,cardelli2011strand}. However, these implementations assume a large supply of ``fuel'' DNA complexes supplying mass and energy for reactions such as $X \to 2X$ that violate the laws of conservation of mass and energy.} and as noted, many non-mass-conserving CRNs respect finite density for a constant amount of initial time.
Nevertheless, it is interesting to note that there are syntactically correct CRNs that violate even this seemingly mild constraint.
Consider the reaction $2X \to 3X$.
This reaction has the property that with high probability, $\# X$ goes to infinity in constant time.
Therefore it is conceivable that such a reaction could be used to consume a species $Y$ via $X + Y \to X$ at such a fast rate that, although species $S$ is producible via $Y \to S$, with high probability $S$ is never produced, because copies of $Y$ are consumed quickly by $X$ before they can undergo the unimolecular reaction that produces $S$.

We have not explored this issue further since such CRNs are not physically implementable by chemicals.
However, it would be interesting to know whether such a CRN violating finite density could be used to construct a counterexample to Theorem~\ref{thm-no-timer}.

An open problem is to precisely characterize the class of CRNs obeying Theorem~\ref{thm-no-timer}; as noted, those respecting finite density with dense initial configurations are a strict subset of this class.

\paragraph{Acknowledgements.}
The author is very grateful to Chris Thachuk, Anne Condon, Damien Woods, Manoj Gopalkrishnan, David Soloveichik, David Anderson, and Lea Popovic for many insightful discussions.

\bibliographystyle{plain}
\bibliography{tam}


\newpage
\appendix
{\LARGE \bf \noindent Technical appendix}

\section{Chernoff bound for exponential decay}
\label{sec-exponential-decay}
In this section, we prove a Chernoff bound on the probability that an exponential decay process reduces to a constant fraction of its initial value after some amount of time.

For all $N\in\Z^+$ and $\lambda>0$, let $\bD_\lambda^N(t)$ be a Markov process on $\{0,1,\ldots,N\}$ governed by exponential decay, with initial value $N$ and decay constant $\lambda$; i.e., $\bD_\lambda^N(t) = N$ for all $t \in[0,\bT_1)$, $\bD_\lambda^N(t)=N-1$ for all $t\in[\bT_1,\bT_1 + \bT_2)$, $\bD_\lambda^N(t)=N-2$ for all $t\in[\bT_1+\bT_2,\bT_1 + \bT_2+\bT_3)$, etc., where  for $i \in \{1,\ldots,N\}$, $\bT_i$ is an exponential random variable with rate $\lambda(N-i+1)$.

\begin{lem}\label{lem-exp-decay-chernoff}
  Let $N \in \Z^+$, $\lambda,t > 0$, and $0 < \delta < 1$.
  Then
  $$\Pr[\bD_\lambda^N(t) < \delta N] < (2 \delta e^{\lambda t})^{\delta N-1}.$$
\end{lem}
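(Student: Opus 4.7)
My plan is to identify $\bD_\lambda^N(t)$ with a Binomial random variable, apply the textbook Chernoff lower-tail bound, and then simplify the resulting expression into the claimed form via a short calculus inequality. First I would show that the death chain with holding times $\bT_i \sim \mathrm{Exp}(\lambda(N-i+1))$ has the same distribution as the surviving-particle count in a population of $N$ particles that decay independently at rate $\lambda$: the minimum of $j$ i.i.d.\ $\mathrm{Exp}(\lambda)$ lifetimes is $\mathrm{Exp}(j\lambda)$, so by memorylessness the transition times of the chain match the order statistics of the lifetimes. Consequently $\bD_\lambda^N(t) \sim \mathrm{Binomial}(N, p)$ with $p = e^{-\lambda t}$.

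The standard Chernoff argument---bound $E[e^{-sX}] \leq \exp\!\bigl(Np(e^{-s}-1)\bigr)$ and optimize at $s = \log(Np/k)$---yields the lower-tail estimate
\[
    \Pr[\bD_\lambda^N(t) \leq k] \leq \left(\frac{eNp}{k}\right)^{k} e^{-Np}
\]
valid whenever $0 \leq k \leq Np$. I would apply this with $k$ equal to the largest integer strictly less than $\delta N$, which bounds $\Pr[\bD_\lambda^N(t) < \delta N]$ from above. In the only interesting regime $2\delta e^{\lambda t} < 1$ we have $\delta < p$, so $k \leq \delta N < Np$ and Chernoff is applicable.

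To convert the Chernoff bound to the stated form, set $r = Np/k$ and rewrite the right-hand side as $(r/e^{r-1})^{k}$. The main analytic input is the one-variable calculus fact $r/e^{r-1} \leq 2/r$ for every $r > 0$---equivalently $r^2 \leq 2 e^{r-1}$---which follows because $u \mapsto u^2/e^{u-1}$ attains its unique maximum $4/e < 2$ at $u = 2$. Substituting gives $(r/e^{r-1})^{k} \leq (2/r)^{k} = (2 k e^{\lambda t}/N)^{k}$; since $k \leq \delta N$ the base is at most $2\delta e^{\lambda t} \leq 1$, and since $k \geq \delta N - 1$ the exponent-monotonicity $x^a \leq x^b$ for $x \in (0,1]$ and $a \geq b$ yields $(2 k e^{\lambda t}/N)^{k} \leq (2 \delta e^{\lambda t})^{\delta N - 1}$. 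The remaining cases---$2\delta e^{\lambda t} \geq 1$, or $\delta N < 1$ so that no nonnegative integer $k$ satisfies $k < \delta N$ except $k=0$---make the right-hand side at least $1$ and the claim vacuous. The only step doing real work is the calculus inequality $r^2 \leq 2e^{r-1}$; the rest is bookkeeping.
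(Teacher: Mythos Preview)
Your argument is correct and takes a genuinely different route from the paper. The paper applies the exponential-Markov method to the \emph{hitting time} $\bT^\delta = \sum_{i=1}^{N-\epsilon N} \bT_i$: it multiplies the individual moment-generating functions $\frac{N-i+1}{N-i+1-\theta/\lambda}$, evaluates at $\theta = -\lambda\epsilon N$, and collapses the resulting product by a telescoping cancellation to reach $(2\epsilon e^{\lambda t})^{\epsilon N}$. You instead identify the \emph{count} $\bD_\lambda^N(t)$ as $\mathrm{Binomial}(N,e^{-\lambda t})$ via the independent-particle coupling and then invoke an off-the-shelf binomial Chernoff tail, so that the only new analytic input is the one-line inequality $r^2 \le 2e^{r-1}$. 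Your route is shorter and more conceptual---the coupling does the structural work up front---while the paper's direct MGF computation avoids the coupling but must discover the telescoping; both are Chernoff arguments, just applied to the time versus the count.

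One small overreach in your final sentence: you claim the ``remaining cases'' all make the right-hand side at least $1$, but when $2\delta e^{\lambda t} > 1$ and $\delta N < 1$ hold \emph{simultaneously} the right-hand side is strictly below $1$, and the stated inequality can actually fail there (take $N=1$, $\delta$ close to $1$, $\lambda t$ large). The paper's own proof shares this blind spot---its choice $\theta = -\lambda\epsilon N$ degenerates to $0$ when $\epsilon N = 0$---and in the paper the lemma is only ever invoked with $2\delta e^{\lambda t} = 1/2$ and $\delta N$ large, so this corner case is irrelevant in context.
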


\begin{proof}
  Let $\epsilon$ be the largest number such that $\epsilon < \delta$ and $\epsilon N \in \N$.
  Then $\epsilon N < \delta N \leq \epsilon N + 1$.

  Define $\bT^\delta = \sum_{i=1}^{N - \epsilon N} \bT_i$ as the random variable representing the time required for $N- \epsilon N$ decay events to happen, i.e., for $\bD_\lambda^N$ to decay to strictly less than $\delta N$ from its initial value $N$.

  If $\bX$ is an exponential random variable with rate $\lambda'$, then the moment-generating function $M_\bX:\R\to\R$ of $\bX$ is $M_\bX(\theta) = \E[e^{\theta \bX}] = \frac{\lambda'}{\lambda' - \theta}$, defined whenever $|\theta| < \lambda'$~\cite{chitale2008random}.
  Therefore
  $$
    M_{\bT_i}(\theta) 
    = \frac{\lambda(N-i+1)}{\lambda(N-i+1)-\theta} = \frac{N-i+1}{N-i+1-\frac{\theta}{\lambda}},
  $$
  defined whenever $|\theta|  < \lambda(N-i+1)$.
  In particular, the smallest such value is $\lambda (N - (N - \epsilon N) + 1) = \lambda (\epsilon N + 1)$, so we must choose $|\theta| < \lambda (\epsilon N + 1)$.

  Because each $\bT_i$ is independent, the moment-generating function of $\bT^\delta$ is
  \begin{eqnarray*}
    M_{\bT^\delta}(\theta)
    &=& \E[e^{\theta \bT^\delta}]
    = \E[e^{\theta \sum_{i=1}^{N - \epsilon N} \bT_i}]
    = \E \left[ \prod_{i=1}^{N-\epsilon N} e^{\theta \bT_i} \right]
    \\&=& \prod_{i=1}^{N-\epsilon N} \E[e^{\theta \bT_i}]
    = \prod_{i=1}^{N-\epsilon N} \frac{N-i+1}{N-i+1-\frac{\theta}{\lambda}}
    = \prod_{i=\epsilon N + 1}^{N} \frac{i}{i-\frac{\theta}{\lambda}}.
  \end{eqnarray*}
  For any $t>0$, the event that $\bT^\delta < t$ (it takes fewer than $t$ seconds for $N - \epsilon N$ decay events to happen) is equivalent to the event that $\bD_\lambda^N(t) < \delta N$ ($\bD_\lambda^N$, which started at $\bD_\lambda^N(0)=N$, has experienced at least $N - \epsilon N$ decay events after $t$ seconds to arrive at $\bD_\lambda^N(t) < \delta N$).

  Using Markov's inequality, for all $\theta < 0$,
  \begin{eqnarray*}
    \Pr[\bD_\lambda^N(t) < \delta N]
    =
    \Pr[\bT^\delta < t]
    =
    \Pr[e^{\theta \bT^\delta} > e^{\theta t}]
    \leq
    \frac{\E[e^{\theta \bT^\delta}]}{e^{\theta t}}
    =
    \frac{1}{e^{\theta t}} \prod_{i=\epsilon N + 1}^{N} \frac{i}{i-\frac{\theta}{\lambda}}
  \end{eqnarray*}

  Let $\theta = -\lambda \epsilon N$.
  This implies $|\theta| = \lambda \epsilon N < \lambda (\epsilon N + 1)$ as required for $M_{\bT_i}(\theta)$ to be defined for all $i \in \{1,\ldots,N-\epsilon N\}$.

  Then
  \begin{eqnarray*}
    &&\Pr[\bD_\lambda^N(t) < \delta N]
    \leq
    e^{\lambda \epsilon N t} \prod_{i=\epsilon N + 1}^{N} \frac{i}{i+\epsilon N}
    \\&=&
    e^{\lambda \epsilon N t} \left( \frac{\epsilon N}{\epsilon N + \epsilon N + 1} \right)
    \left( \frac{\epsilon N + 1}{\epsilon N + \epsilon N + 2} \right)
    \ldots
    \left( \frac{N - 1}{N + \epsilon N - 1} \right)
    \left( \frac{N}{N + \epsilon N} \right)
    \\&=&
    e^{\lambda \epsilon N t}
    \frac{(\epsilon N)(\epsilon N + 1)\ldots (\epsilon N + \epsilon N - 2) (\epsilon N + \epsilon N - 1)}{(N + 1)(N+2) \ldots (N+\epsilon N - 1) (N+\epsilon N)} \ \ \ \ \text{cancel terms}
    \\&<&
    e^{\lambda \epsilon N t}
    \frac{(2 \epsilon N)^{\epsilon N}}{N^{\epsilon N}}
    <
    (2 \epsilon e^{\lambda t})^{\epsilon N}
    <
    (2 \delta e^{\lambda t})^{\delta N - 1},
  \end{eqnarray*}
  which completes the proof.
\end{proof}

\section{Chernoff bounds for biased random walks}
\label{sec-random-walks}
This section proves Chernoff bounds for biased random walks that are used in Section~\ref{sec-timer}.
Lemma~\ref{lem-random-walk-f-r} shows that a random walk on $\Z$ with forward rate $\hf$ and reverse rate $\hr$ has a high probability to take at least $\Omega((\hf-\hr) t)$ net forward steps after $t$ seconds.
Lemma~\ref{lem-random-walk-reflecting} uses Lemma~\ref{lem-random-walk-f-r} to show that a random walk on $\N$ (i.e., with a reflecting barrier at state 0) with reverse rate in state $i$ proportional to $i$, has a high probability to reach state $j$ in time $t$, where $j$ is sufficiently small based on the backward rate and $t$.

We require the following Chernoff bound on Poisson distributions, due to Franceschetti, Dousse, Tse, and Thiran~\cite{FraDouTseThi07}.

\begin{thm}[\cite{FraDouTseThi07}]
  \label{thm-poisson-chernoff}
  Let $\bP(\lambda)$ be a Poisson random variable with rate $\lambda$.
  Then for all $n \in \N$
  $$
    \Pr \left[ \bP(\lambda) \geq n \right] \leq e^{-\lambda} \left( \frac{e \lambda}{n} \right)^n
  $$
  if $n > \lambda$ and
  $$
    \Pr \left[ \bP(\lambda) \leq n \right] \leq e^{-\lambda} \left( \frac{e \lambda}{n} \right)^n
  $$
  if $n < \lambda$.
\end{thm}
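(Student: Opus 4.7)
The plan is to apply the standard Chernoff bound via the moment generating function (MGF) of the Poisson distribution. First I would compute the MGF explicitly: for $X \sim \bP(\lambda)$ and any $\theta \in \mathbb{R}$,
$$
  \E[e^{\theta X}] = \sum_{k=0}^\infty e^{\theta k} \frac{e^{-\lambda} \lambda^k}{k!} = e^{-\lambda} \sum_{k=0}^\infty \frac{(\lambda e^\theta)^k}{k!} = e^{\lambda(e^\theta - 1)}.
$$

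For the upper tail, assume $n > \lambda$. For any $\theta > 0$, applying Markov's inequality to the nonnegative random variable $e^{\theta X}$ yields
$$
  \Pr[\bP(\lambda) \geq n] = \Pr[e^{\theta X} \geq e^{\theta n}] \leq e^{-\theta n} \E[e^{\theta X}] = e^{\lambda(e^\theta - 1) - \theta n}.
$$
Minimizing the exponent over $\theta$ by setting its derivative $\lambda e^\theta - n$ to zero gives $\theta^\star = \ln(n/\lambda)$, which is strictly positive precisely because $n > \lambda$ (so the application of Markov with $\theta > 0$ is legitimate). Substituting $\theta^\star$ yields
$$
  \Pr[\bP(\lambda) \geq n] \leq e^{-\lambda + n - n \ln(n/\lambda)} = e^{-\lambda} \left(\frac{e \lambda}{n}\right)^n,
$$
which is the claimed bound.

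For the lower tail, assume $n < \lambda$, and run the symmetric argument with $\theta < 0$. Since $e^{\theta x}$ is then decreasing in $x$, the event $X \leq n$ coincides with $e^{\theta X} \geq e^{\theta n}$, and Markov's inequality gives the same expression $e^{\lambda(e^\theta - 1) - \theta n}$. The unconstrained optimum is again $\theta^\star = \ln(n/\lambda)$, which is now strictly negative precisely because $n < \lambda$, so the same substitution produces the identical bound $e^{-\lambda}(e\lambda/n)^n$.

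There is no serious obstacle here: the argument is a textbook application of the Chernoff method, and the only subtle point is the sign of $\theta^\star$. The hypotheses $n > \lambda$ and $n < \lambda$ in the two cases are exactly what make the optimal $\theta^\star$ have the correct sign (positive for the upper tail, negative for the lower tail) so that Markov's inequality is applied in the intended direction in each case.
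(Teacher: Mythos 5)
Your proof is correct: the MGF of the Poisson is computed correctly, the Markov/Chernoff step is applied with the right sign of $\theta$ in each case, and the optimization $\theta^\star = \ln(n/\lambda)$ yields exactly the stated bound $e^{-\lambda}(e\lambda/n)^n$. The paper itself gives no proof of this theorem---it is cited directly from Franceschetti, Dousse, Tse, and Thiran---but your argument is the standard Chernoff-method derivation that the cited source uses, so there is no divergence in approach worth flagging.
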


The following corollaries are used in the proof of Lemma~\ref{lem-random-walk-f-r}.

\begin{cor}\label{cor-poisson-chernoff-gamma-leq}
  Let $0 < \gamma < 1$. Then
  $
    \Pr[\bP(\lambda) \leq \gamma \lambda]
    \leq
    e^{-\lambda} \left( \frac{e \lambda}{\gamma \lambda} \right)^{\gamma \lambda}
    =
    \left( \frac{e^{1-\frac{1}{\gamma}}}{\gamma} \right)^{\gamma \lambda}.
  $
\end{cor}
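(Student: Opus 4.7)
The plan is a direct application of the lower-tail half of Theorem~\ref{thm-poisson-chernoff}. Since $0 < \gamma < 1$ and $\lambda > 0$, we have $\gamma \lambda < \lambda$, so that case of the theorem applies with $n$ taken to be (essentially) $\gamma \lambda$.

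Concretely, I would first note that because $\bP(\lambda)$ is integer-valued, $\Pr[\bP(\lambda) \leq \gamma \lambda] = \Pr[\bP(\lambda) \leq \lfloor \gamma \lambda \rfloor]$, which makes the theorem applicable with the integer $n = \lfloor \gamma \lambda \rfloor < \lambda$. This yields an upper bound of $e^{-\lambda} (e\lambda/n)^n$. To rewrite this in terms of $\gamma \lambda$ as the corollary demands, I would check that $f(x) = (e\lambda/x)^x$ is increasing on $(0,\lambda)$: a one-line logarithmic-derivative calculation gives $\frac{d}{dx}\log f(x) = \log(\lambda/x) > 0$ for $x < \lambda$. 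Since $\lfloor \gamma \lambda \rfloor \leq \gamma \lambda < \lambda$, the bound at $n = \lfloor \gamma \lambda \rfloor$ is dominated by $f(\gamma \lambda) = (e/\gamma)^{\gamma \lambda}$, so $\Pr[\bP(\lambda) \leq \gamma \lambda] \leq e^{-\lambda} (e/\gamma)^{\gamma \lambda}$, which is the first inequality in the statement.

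The displayed equality then falls out from pure algebra: rewrite the prefactor as $e^{-\lambda} = (e^{-1/\gamma})^{\gamma \lambda}$ and absorb it into the base of the existing power, producing $\left( e^{1 - 1/\gamma} / \gamma \right)^{\gamma \lambda}$. There is no genuine obstacle here; the corollary is simply a repackaging of the lower-tail Poisson Chernoff bound in a form suited to the uses later in the appendix, and the only mildly subtle point is the integer-vs-real passage handled by the monotonicity of $f$.
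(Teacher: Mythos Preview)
Your proof is correct and matches the paper's approach: the paper states this as an immediate corollary of Theorem~\ref{thm-poisson-chernoff} with no further argument, so the intended proof is simply the substitution $n = \gamma\lambda$ followed by the algebraic rewrite you describe. Your handling of the integer constraint via $\lfloor \gamma\lambda \rfloor$ and the monotonicity of $x \mapsto (e\lambda/x)^x$ on $(0,\lambda)$ is in fact more careful than what the paper bothers to write down.
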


\begin{cor}\label{cor-poisson-chernoff-gamma-geq}
  Let $\gamma > 1$. Then
  $
    \Pr[\bP(\lambda) \geq \gamma \lambda]
    \leq
    e^{-\lambda} \left( \frac{e \lambda}{\gamma \lambda} \right)^{\gamma \lambda}
    =
    \left( \frac{e^{1-\frac{1}{\gamma}}}{\gamma} \right)^{\gamma \lambda}.
  $
\end{cor}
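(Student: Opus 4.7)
The plan is to obtain this as an immediate one-line instantiation of Theorem~\ref{thm-poisson-chernoff}, followed by an algebraic rearrangement identical to the one used in Corollary~\ref{cor-poisson-chernoff-gamma-leq} just above. First I would apply Theorem~\ref{thm-poisson-chernoff} with $n = \gamma \lambda$. The hypothesis $n > \lambda$ required by the \emph{upper} tail branch of the theorem is equivalent to $\gamma > 1$, which is exactly the hypothesis of the corollary; so that branch applies and yields
$$
\Pr[\bP(\lambda) \geq \gamma \lambda]
\leq
e^{-\lambda} \left( \frac{e \lambda}{\gamma \lambda} \right)^{\gamma \lambda},
$$
which is the left-hand expression in the statement.

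Second, I would verify the claimed equality by cancelling $\lambda$ inside the parentheses to get $e/\gamma$, then splitting the $\gamma\lambda$-th power as $e^{\gamma\lambda}/\gamma^{\gamma\lambda}$ and combining with the leading $e^{-\lambda}$:
$$
e^{-\lambda} \left( \frac{e}{\gamma} \right)^{\gamma \lambda}
= \frac{e^{\gamma\lambda - \lambda}}{\gamma^{\gamma\lambda}}
= \left( \frac{e^{1 - 1/\gamma}}{\gamma} \right)^{\gamma \lambda}.
$$
This is the right-hand expression.

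The only technicality worth noting is that Theorem~\ref{thm-poisson-chernoff} is stated for $n \in \N$, while $\gamma\lambda$ need not be an integer. This is benign: one can either invoke the theorem with $n = \lceil \gamma\lambda\rceil$ (since $\Pr[\bP(\lambda) \geq \gamma\lambda] = \Pr[\bP(\lambda) \geq \lceil \gamma\lambda\rceil]$) and absorb the negligible looseness, or observe that the standard Chernoff derivation underlying Theorem~\ref{thm-poisson-chernoff} extends verbatim to real $n > \lambda$. Either way the stated inequality holds.

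There is essentially no obstacle here; the whole content of the corollary is the substitution $n = \gamma\lambda$ plus the rewrite to isolate the base $e^{1-1/\gamma}/\gamma$. The only thing to be careful about is picking the correct branch of Theorem~\ref{thm-poisson-chernoff}: the case $\gamma > 1$ gives $n > \lambda$ and so uses the upper-tail inequality, mirroring precisely how Corollary~\ref{cor-poisson-chernoff-gamma-leq} uses the lower-tail inequality under the assumption $0 < \gamma < 1$.
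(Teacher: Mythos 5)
Your proof is correct and is exactly the argument the paper intends: the corollary is stated without proof precisely because it is the immediate substitution $n = \gamma\lambda$ into the upper-tail branch of Theorem~\ref{thm-poisson-chernoff} (valid since $\gamma>1$ gives $n>\lambda$), followed by the same algebraic rewrite used in Corollary~\ref{cor-poisson-chernoff-gamma-leq}. Your remark about $\gamma\lambda$ possibly being non-integral is a legitimate technicality the paper glosses over, and your suggested fixes are the standard ones.
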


\begin{figure}[t]
\centering
\centering
        \begin{subfigure}[b]{3in}
                \centering
                {\includegraphics[width=3in]{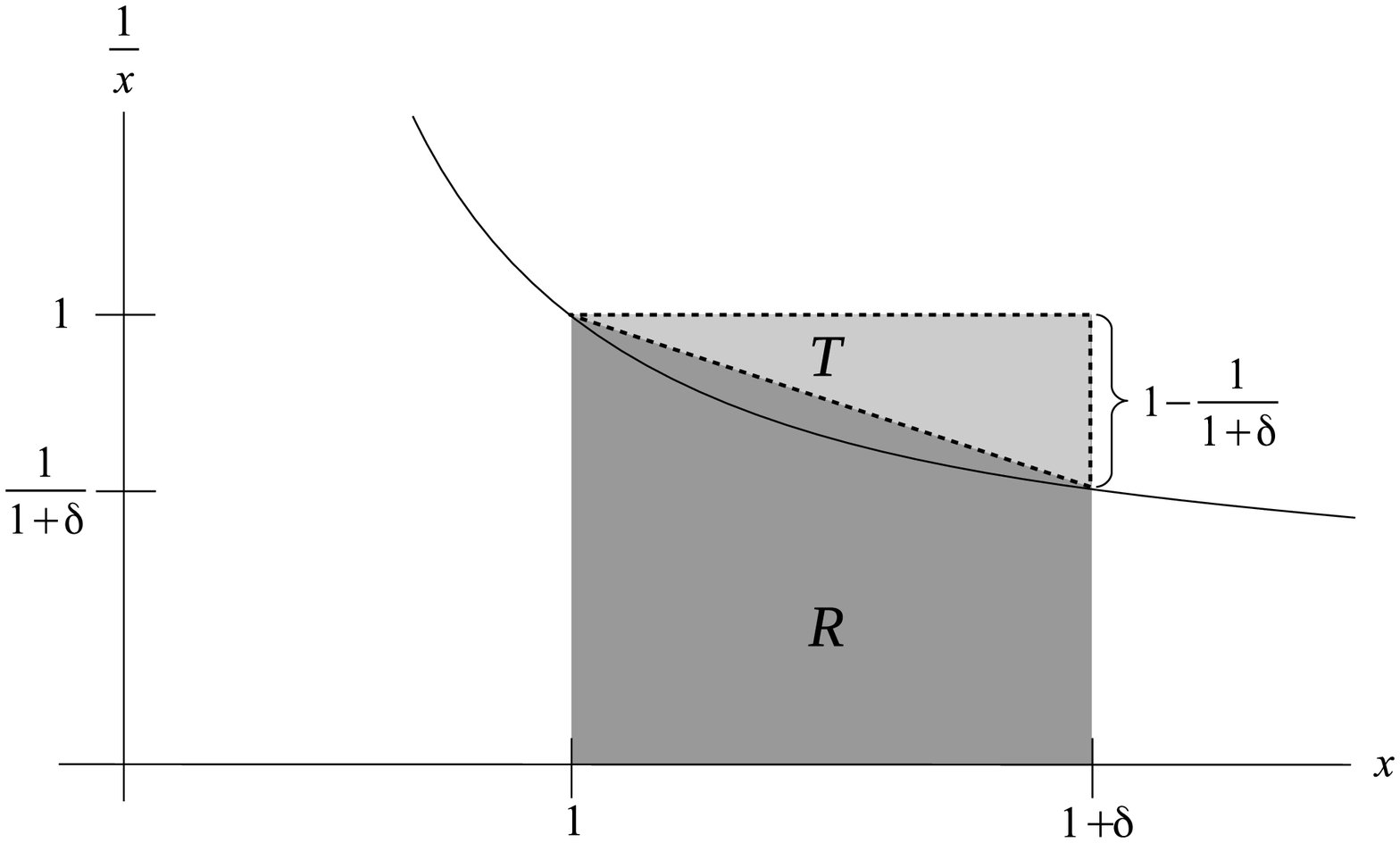}}
                \caption{\footnotesize Lemma~\ref{lem-ln-pos}: $\int_1^{1+\delta} \frac{1}{x}\ dx < \mathrm{area}(R) - \mathrm{area}(T)$.}
                \label{fig:ln-pos}
        \end{subfigure}
~ 
        \begin{subfigure}[b]{3in}
                \centering
                {\includegraphics[width=3in]{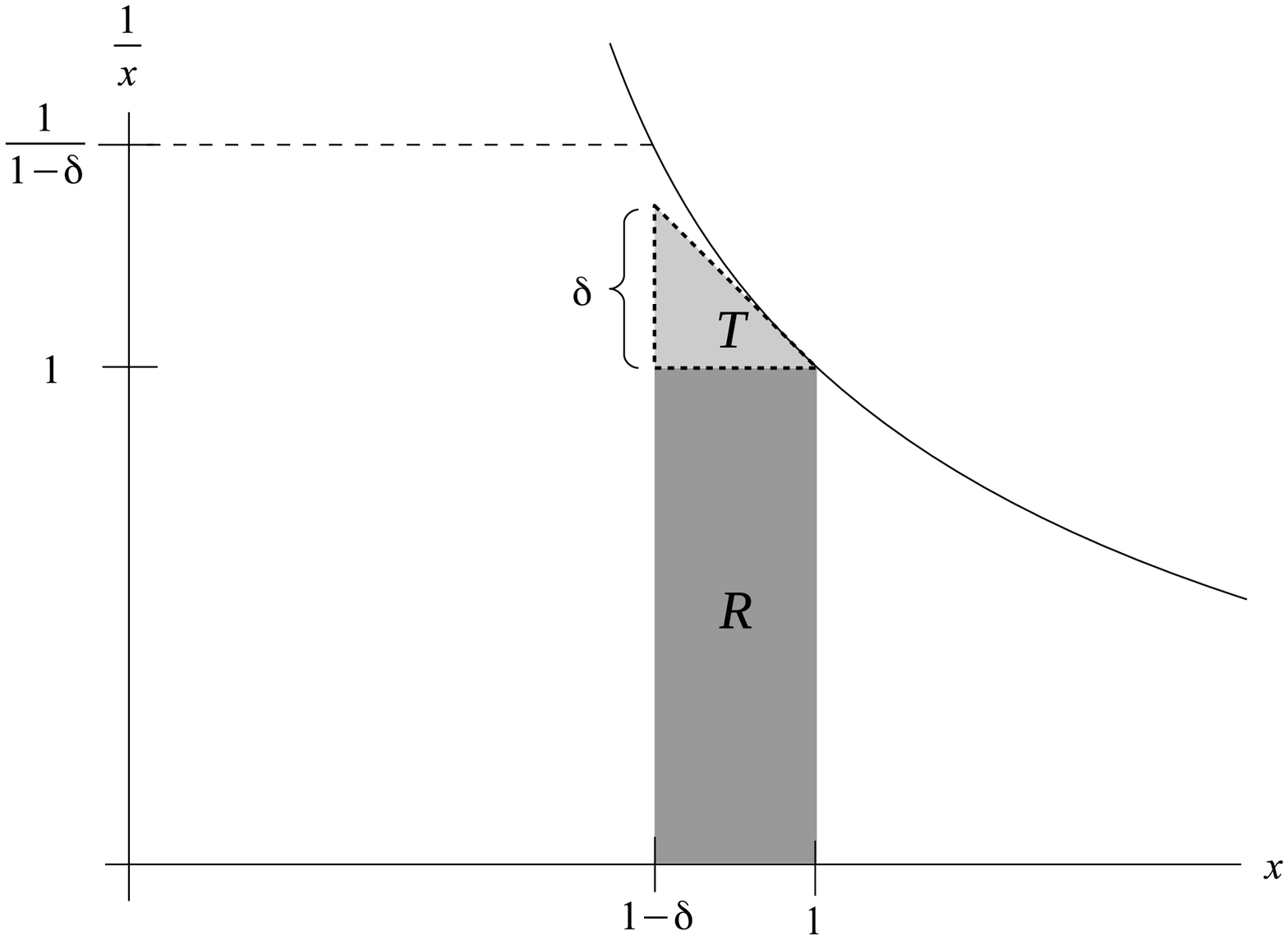}}
                \caption{\footnotesize Lemma~\ref{lem-ln-neg}: $\int_{1-\delta}^1 \frac{1}{x}\ dx > \mathrm{area}(R) + \mathrm{area}(T)$.}
                \label{fig:ln-neg}
        \end{subfigure}
\caption{
Illustration of Lemmas~\ref{lem-ln-pos} and~\ref{lem-ln-neg} (not to scale).
\label{fig:ln}
}
\end{figure}

We require the following bound on the natural logarithm function.

\begin{lem}\label{lem-ln-pos}
  Let $\delta > 0$.
  Then
  $$
    \ln (1+\delta) < \frac{\delta}{2} + \frac{\delta}{2(1+\delta)}.
  $$
\end{lem}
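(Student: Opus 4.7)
The plan is to interpret both sides of the inequality geometrically, as suggested by Figure~\ref{fig:ln-pos}. The left side is simply $\ln(1+\delta) = \int_1^{1+\delta} \frac{dx}{x}$, the area under the curve $y = 1/x$ over the interval $[1, 1+\delta]$. For the right side, I would verify by direct algebra that
\[
\frac{\delta}{2} + \frac{\delta}{2(1+\delta)} = \frac{\delta}{2}\left(1 + \frac{1}{1+\delta}\right),
\]
which is exactly the area of the trapezoid with vertices $(1,0)$, $(1+\delta, 0)$, $(1+\delta, 1/(1+\delta))$, $(1,1)$, i.e., the region lying under the chord joining $(1,1)$ and $(1+\delta, 1/(1+\delta))$.

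Once this identification is made, the inequality follows immediately from strict convexity of the function $x \mapsto 1/x$ on $(0,\infty)$: since the second derivative $2/x^3$ is strictly positive, the chord strictly exceeds the curve on the open interval $(1, 1+\delta)$, so integrating gives $\int_1^{1+\delta} \frac{dx}{x} < \mathrm{area}(\text{trapezoid})$. This is precisely the claimed bound, and corresponds to the figure's decomposition of the trapezoid as the bounding rectangle $R$ of area $\delta$ minus the upper triangle $T$ of area $\delta^2/(2(1+\delta))$.

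As a backup in case a purely calculus-based proof is preferred (e.g.\ to avoid appealing to the figure), I would define $g(\delta) = \frac{\delta}{2} + \frac{\delta}{2(1+\delta)} - \ln(1+\delta)$, note that $g(0) = 0$, and compute
\[
g'(\delta) = \frac{1}{2} + \frac{1}{2(1+\delta)^2} - \frac{1}{1+\delta} = \frac{1}{2}\left(1 - \frac{1}{1+\delta}\right)^2,
\]
which is strictly positive for $\delta > 0$. Hence $g$ is strictly increasing on $[0,\infty)$, giving $g(\delta) > 0$ for all $\delta > 0$, i.e.\ the desired inequality.

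There is really no main obstacle here: both routes are short and elementary. The only care needed is the algebraic verification that the claimed right-hand side equals the trapezoid area (or equivalently, that $g'$ factors as a perfect square), but this is a one-line check. I would go with the geometric argument since the figure has already been set up to motivate it, and the companion Lemma~\ref{lem-ln-neg} can be proved by the symmetric picture on $[1-\delta, 1]$.
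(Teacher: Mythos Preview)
Your proposal is correct and essentially identical to the paper's proof: both express $\ln(1+\delta)$ as $\int_1^{1+\delta}\frac{dx}{x}$ and bound it above by the trapezoid under the chord, invoking convexity of $1/x$; your trapezoid is exactly the paper's $\mathrm{area}(R)-\mathrm{area}(T)$, as you yourself note. The backup derivative argument is a nice extra but not needed.
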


\begin{proof}
  See Figure~\ref{fig:ln-pos} for an illustration of the geometric intuition.
  Recall that for $a \geq 1$, $\ln a = \int_1^a \frac{1}{x}\ dx$.
  Since $\frac{1}{x}$ is convex, the area defined by this integral is at most the area of $R$, the rectangle of width $\delta$ and height 1, minus the area of $T$, the right triangle of width $\delta$ and height $1-\frac{1}{1+\delta}$.
  Therefore
  \[
    \ln (1+\delta)
    =
    \int_1^{1+\delta} \frac{1}{x}\ dx
    <
    \delta -  \frac{1}{2} \delta \left( 1 - \frac{1}{1+\delta} \right)
    =
    \frac{\delta}{2} + \frac{\delta}{2(1+\delta)}.
    \qedhere
  \]
\end{proof}

A often-useful upper bound is $\ln (1+\delta) < \delta$, i.e., using the area of $R$ as an upper bound.
Interestingly, shaving $\delta$ down to $\frac{\delta}{2} + \frac{\delta}{2 (1+\delta)}$ by subtracting $\mathrm{area}(T)$ is crucial for proving Lemma~\ref{lem-random-walk-f-r}; using the weaker bound $\ln(1+\delta)<\delta$ in the proof of Lemma~\ref{lem-random-walk-f-r} gives only the trivial upper bound of 1 for the probability in that proof.
The same holds true for the next lemma (i.e., the term $\frac{\delta^2}{2}$ is crucial), which is employed similarly in the proof of Lemma~\ref{lem-random-walk-f-r}.


\begin{lem}\label{lem-ln-neg}
  Let $0 < \delta < 1$.
  Then
  $$
    \ln (1-\delta) < - \delta - \frac{\delta^2}{2}.
  $$
\end{lem}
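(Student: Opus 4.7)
The plan is to mirror the geometric argument of Lemma~\ref{lem-ln-pos}, using the convexity of $1/x$ to obtain a lower bound on $\int_{1-\delta}^1 \frac{1}{x}\ dx = -\ln(1-\delta)$, as depicted in Figure~\ref{fig:ln-neg}. The key observation is that since $1/x$ is strictly convex on $(0,\infty)$, its tangent line at $x = 1$, namely $y = 2 - x$, lies strictly below the curve $y = 1/x$ on $[1-\delta, 1)$ and meets it only at $x = 1$.

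I would then decompose the area under this tangent line over $[1-\delta, 1]$ into two pieces: the rectangle $R$ with opposite corners $(1-\delta, 0)$ and $(1, 1)$, of area $\delta$; and the right triangle $T$ sitting on top of $R$, with vertices $(1-\delta, 1)$, $(1-\delta, 1+\delta)$, and $(1, 1)$, of area $\delta^2/2$. Because the tangent lies strictly below $1/x$ on the interior, the integral strictly exceeds $\mathrm{area}(R) + \mathrm{area}(T) = \delta + \delta^2/2$, so
\[
-\ln(1-\delta) \;=\; \int_{1-\delta}^1 \frac{1}{x}\ dx \;>\; \delta + \frac{\delta^2}{2},
\]
which rearranges to the claim.

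I do not foresee any real obstacle. The only step worth double-checking is the tangent-below-curve claim, which follows either immediately from strict convexity of $1/x$ or, concretely, from the identity $\frac{1}{x} - (2 - x) = \frac{(1-x)^2}{x}$, which is positive on $[1-\delta, 1)$. As noted in the paper after Lemma~\ref{lem-ln-pos}, the refinement by the triangle term $\delta^2/2$ (as opposed to the trivial bound $-\ln(1-\delta) > \delta$ coming from $R$ alone) is precisely what is needed to make the bound strong enough for use in the proof of Lemma~\ref{lem-random-walk-f-r}.
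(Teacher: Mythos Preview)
Your proof is correct and essentially identical to the paper's: both bound $-\ln(1-\delta)=\int_{1-\delta}^1 \frac{1}{x}\,dx$ from below by the area under the tangent line $y=2-x$ at $x=1$, decomposed as the rectangle $R$ of area $\delta$ plus the right triangle $T$ of area $\delta^2/2$. The paper phrases the tangent-below-curve step in terms of the derivative ($\frac{d}{dx}\frac{1}{x}=-1$ at $x=1$ and is $<-1$ for $0<x<1$), while you invoke strict convexity and the identity $\frac{1}{x}-(2-x)=\frac{(1-x)^2}{x}$; these are the same argument.
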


\begin{proof}
  See Figure~\ref{fig:ln-neg} for an illustration of the geometric intuition.
  Recall that for $0 < a \leq 1$, $- \ln a = \int_a^1 \frac{1}{x}\ dx$.
  The area defined by this integral is at least the area of $R$, the rectangle of width $\delta$ and height $1$, plus the area of $T$, the right triangle of width $\delta$ and height $\delta$. This is because $\frac{d}{dx} \frac{1}{x} = -1$ at the value $x=1$ and $\frac{d}{dx} \frac{1}{x} < -1$ for all $0 < x < 1$, so the hypotenuse of $T$ touches the curve at $x=1$ and lies strictly underneath the curve for all $1-\delta \leq x < 1$.
  Therefore
  \[
    - \ln (1-\delta)
    =
    \int_{1-\delta}^1 \frac{1}{x}\ dx
    >
    \delta + \frac{\delta^2}{2}. \qedhere
  \]
\end{proof}

\subsection{Chernoff bound for biased random walk on $\Z$}

Let $\hf,\hr > 0$.
Let $\bU_{\hf,\hr}(t)$ be a continuous-time Markov process with state set $\Z$ in which $\bU_{\hf,\hr}(0) = 0$, with transitions from state $i$ to $i+1$ with rate $\hf$ for all $i\in\Z$, and with transitions from state $i+1$ to $i$ with rate $\hr$ for all $i\in\Z$.
In other words, $\bU_{\hf,\hr}(t)$ is a continuous time biased random walk on $\Z$, with forward bias $\hf$ and reverse bias $\hr$.

\begin{lem}\label{lem-random-walk-f-r}
  For all $\hf > \hr > 0$ and all $t,\hat{\epsilon}>0$,
  $$
    \Pr \left[ \bU_{\hf,\hr}(t) < (1-\hat{\epsilon})(\hf - \hr) t \right]
    <
    2 e^{- \frac{\hat{\epsilon}^2 (\hf - \hr)^2}{8 \hf} t }
  $$
\end{lem}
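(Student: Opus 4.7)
The plan is to exploit the simple independence structure of $\bU_{\hf,\hr}$. Since forward jumps occur at constant rate $\hf$ and reverse jumps at constant rate $\hr$, independent of the current state, the process can be written as $\bU_{\hf,\hr}(t) = F - R$ where $F \sim \bP(\hf t)$ and $R \sim \bP(\hr t)$ are independent Poisson random variables. The mean drift is $\E[F-R]=(\hf-\hr)t$, so the lemma is a standard concentration statement about a difference of two independent Poissons.

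First I would reduce the two-sided tail via the union bound: if $F-R < (1-\hat{\epsilon})(\hf-\hr)t$, then either
$$F < \hf t - \tfrac{\hat{\epsilon}}{2}(\hf - \hr)t \qquad \text{or} \qquad R > \hr t + \tfrac{\hat{\epsilon}}{2}(\hf-\hr)t,$$
because if both failed, adding them would give $F-R \geq (1-\hat{\epsilon})(\hf-\hr)t$. The deficit $\hat{\epsilon}(\hf-\hr)t$ is split \emph{symmetrically in absolute magnitude} rather than proportionally to the means; this is what causes the final exponent to depend on $(\hf-\hr)^2/\hf$ rather than on $\hf$ and $\hr$ separately.

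Next I would apply the Poisson Chernoff bounds. For the lower tail on $F$, set $\delta_F = \frac{\hat{\epsilon}(\hf-\hr)}{2\hf}$ and $\gamma_F = 1-\delta_F$, and invoke Corollary~\ref{cor-poisson-chernoff-gamma-leq} to get $\Pr[F \leq \gamma_F \hf t] \leq \bigl(e^{1-1/\gamma_F}/\gamma_F\bigr)^{\gamma_F \hf t}$. Taking logarithms, the exponent is $\hf t\,[\gamma_F - 1 - \gamma_F \ln \gamma_F] = \hf t\,[-\delta_F - (1-\delta_F)\ln(1-\delta_F)]$. Plugging in Lemma~\ref{lem-ln-neg}'s bound $\ln(1-\delta_F) < -\delta_F - \delta_F^2/2$ makes the linear $\delta_F$ term cancel and leaves a negative quadratic term of at least $\delta_F^2 \hf t/2$; substituting the value of $\delta_F$ gives an upper bound of $e^{-\hat{\epsilon}^2(\hf-\hr)^2 t/(8\hf)}$. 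Symmetrically, for the upper tail on $R$, set $\delta_R = \frac{\hat{\epsilon}(\hf-\hr)}{2\hr}$ and $\gamma_R = 1+\delta_R$, use Corollary~\ref{cor-poisson-chernoff-gamma-geq}, and apply Lemma~\ref{lem-ln-pos}'s bound $\ln(1+\delta_R) < \delta_R/2 + \delta_R/(2(1+\delta_R))$; a parallel calculation yields $e^{-\hat{\epsilon}^2(\hf-\hr)^2 t/(8\hr)}$. Since $\hr < \hf$, the $F$-bound is the larger (weaker) of the two, so both tail probabilities are dominated by $e^{-\hat{\epsilon}^2(\hf-\hr)^2 t/(8\hf)}$, and the union bound introduces the factor of $2$ in the stated bound.

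The main obstacle, as the excerpt's commentary around Lemma~\ref{lem-ln-neg} already warns, is getting the $\ln(1\pm\delta)$ estimates sharp enough. Using the crude bounds $\ln(1-\delta) < -\delta$ and $\ln(1+\delta) < \delta$ would cause the linear $\delta$ pieces to cancel exactly and leave zero in the exponent, producing only the trivial bound $1$. The tighter bounds in Lemmas~\ref{lem-ln-pos} and~\ref{lem-ln-neg} are engineered precisely so that after the cancellation a residual $-\delta^2/2$ survives, which is exactly what is needed. The only other small care-point is algebraic: checking that the symmetric split of the deficit (half to $F$, half to $R$) leads to $\delta_F^2 \hf = \hat{\epsilon}^2(\hf-\hr)^2/(4\hf)$ and $\delta_R^2 \hr = \hat{\epsilon}^2(\hf-\hr)^2/(4\hr)$, so that the common prefactor $\hat{\epsilon}^2(\hf-\hr)^2 t/8$ emerges naturally in both exponents.
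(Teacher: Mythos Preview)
Your overall strategy coincides with the paper's: decompose $\bU_{\hf,\hr}(t)=F-R$ with independent Poissons, split the deficit $\hat{\epsilon}(\hf-\hr)t$ evenly between an $F$-shortfall and an $R$-overshoot, apply Poisson Chernoff bounds, and extract a $-\delta^2$ term from a $\ln(1\pm\delta)$ estimate. The issue is that, with your parametrization, you have the two log-lemmas pointing the wrong way.

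Concretely, with $\gamma_F=1-\delta_F$ your exponent for the $F$-tail is $\hf t\bigl[-\delta_F-(1-\delta_F)\ln(1-\delta_F)\bigr]$. To upper-bound the probability you need an \emph{upper} bound on this, hence a \emph{lower} bound on $\ln(1-\delta_F)$; but Lemma~\ref{lem-ln-neg} gives only an upper bound $\ln(1-\delta_F)<-\delta_F-\delta_F^2/2$. Plugging that in yields $-\delta_F-(1-\delta_F)\ln(1-\delta_F) > -\tfrac{\delta_F^2}{2}-\tfrac{\delta_F^3}{2}$, a lower bound on the exponent, which is useless. (The inequality you want, exponent $\leq -\delta_F^2\hf t/2$, is in fact true, but not via Lemma~\ref{lem-ln-neg}.) The $R$-tail is worse: with $\gamma_R=1+\delta_R$ the exponent is $\hr t\bigl[\delta_R-(1+\delta_R)\ln(1+\delta_R)\bigr]$, and your target $\leq -\delta_R^2\hr t/2$ is \emph{false} (Taylor gives $\delta-(1+\delta)\ln(1+\delta)=-\delta^2/2+\delta^3/6-\cdots>-\delta^2/2$), so no choice of lemma will rescue that exact claim.

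The paper repairs this by parametrizing through the reciprocal: write $1/\gamma=1+\delta$ for the $F$-tail and $1/\gamma'=1-\delta$ for the $R$-tail. Then the Chernoff expressions become $((1+\delta)e^{-\delta})^{\gamma\lambda}$ and $((1-\delta)e^{\delta})^{\gamma'\lambda'}$, whose log-exponents are $\ln(1+\delta)-\delta$ and $\ln(1-\delta)+\delta$; now the \emph{upper} bounds of Lemma~\ref{lem-ln-pos} (for $F$) and Lemma~\ref{lem-ln-neg} (for $R$) go the right direction and yield $\leq -\tfrac{\delta^2}{2(1+\delta)}$ and $\leq -\tfrac{\delta^2}{2}$ respectively. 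So the fix is simply to swap which lemma handles which tail via this reparametrization; after that, your union-bound and the final substitution to get $2e^{-\hat{\epsilon}^2(\hf-\hr)^2 t/(8\hf)}$ proceed exactly as in the paper.
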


\begin{proof}
  Since the forward and reverse rates of $\bU_{\hf,\hr}$ are constants independent of the state, the total number of forward transitions $\bF$ and the total number of reverse transitions $\bR$ in the time interval $[0,t]$ are independent random variables, such that $\bF - \bR = \bU_{\hf,\hr}(t)$.
  $\bF$ is a Poisson distribution with rate $f=\hf t$ (hence $\E[\bF]=f$) and $\bR$ is a Poisson distribution with rate $r=\hr t$.

  Let $\epsilon = \hat{\epsilon} / 2$.
  Let $d = f-r > 0$.
  Let $\gamma
  = \frac{f - \epsilon d}{f}
  = \frac{(1-\epsilon)f + \epsilon r}{f}$.
  Let $\lambda = f$.
  Let $\delta = \frac{\epsilon (f-r)}{(1-\epsilon)f + \epsilon r},$ and note that $1 + \delta = \frac{f}{(1-\epsilon)f + \epsilon r}$.
  By Corollary~\ref{cor-poisson-chernoff-gamma-leq},
  \begin{align*}
    \Pr[\bF \leq \gamma \lambda]
    &\leq
    \left( \frac{e^{1-\frac{1}{\gamma}}}{\gamma} \right)^{\gamma \lambda}
    &&=
    \left( \frac{f \exp \left( 1-\frac{f}{(1-\epsilon)f + \epsilon r} \right) }{(1-\epsilon)f + \epsilon r} \right)^{(1-\epsilon)f + \epsilon r}
    \\&=
    \left( (1+\delta) \exp \left( 1- (1+\delta) \right) \right)^{(1-\epsilon)f + \epsilon r}
    &&=
    \exp \left( \ln(1+\delta) - \delta) \right)^{(1-\epsilon)f + \epsilon r}
    \\&<
    \exp \left( \frac{\delta}{2} + \frac{\delta}{2(1+\delta)} - \delta \right)^{(1-\epsilon)f + \epsilon r}
    &&\text{by Lemma~\ref{lem-ln-pos}}
    \\&=
    \exp \left( \frac{\delta - \delta(1+\delta)}{2(1+\delta)}  \right)^{(1-\epsilon)f + \epsilon r}
    &&=
    \exp \left( \frac{- \delta^2}{2(1+\delta)}  \right)^{(1-\epsilon)f + \epsilon r}
    \\&=
    \exp \left( \frac{- \left( \frac{\epsilon (f-r)}{(1-\epsilon)f + \epsilon r} \right)^2}{2 \left( 1+\frac{\epsilon (f-r)}{(1-\epsilon)f + \epsilon r} \right) }  \right)^{(1-\epsilon)f + \epsilon r}
    &&=
    \exp \left( \frac{- \left( \frac{\epsilon (f-r)}{(1-\epsilon)f + \epsilon r} \right)^2}{2 \left( \frac{(1-\epsilon)f + \epsilon r + \epsilon (f-r)}{(1-\epsilon)f + \epsilon r} \right) }  \right)^{(1-\epsilon)f + \epsilon r}
    \\&=
    \exp \left( \frac{- \left( \frac{\epsilon (f-r)}{(1-\epsilon)f + \epsilon r} \right)^2}{ \frac{2 f}{(1-\epsilon)f + \epsilon r} }  \right)^{(1-\epsilon)f + \epsilon r}
    &&=
    \exp \left( \frac{- \left( \epsilon (f-r) \right)^2}{ 2 f ((1-\epsilon)f + \epsilon r) } \right)^{(1-\epsilon)f + \epsilon r}
    \\&=
    \exp \left( \frac{- (\epsilon(f - r))^2}{2 f} \right).
    \addtag\label{ineq-F}
  \end{align*}

  Let $\gamma' = \frac{r + \epsilon d}{r} = \frac{\epsilon f + (1-\epsilon)r}{r}$ and $\lambda' = r$.
  Let $\delta = \frac{\epsilon (f - r) }{\epsilon f + (1-\epsilon)r},$ and note that $1 - \delta = \frac{r}{\epsilon f + (1-\epsilon)r}$.
  By Corollary~\ref{cor-poisson-chernoff-gamma-geq},
  \begin{align*}
    \Pr[\bR \geq \gamma' \lambda']
    &\leq
    \left( \frac{e^{1-\frac{1}{\gamma'}}}{\gamma'} \right)^{\gamma' \lambda'}
    &&=
    \left( \frac{r \exp \left( 1-\frac{r}{\epsilon f + (1-\epsilon)r} \right) }{\epsilon f + (1-\epsilon)r} \right)^{\epsilon f + (1-\epsilon)r}
    \\&=
    \left( (1-\delta) \exp \left( 1-(1-\delta) \right) \right)^{\epsilon f + (1-\epsilon)r}
    &&=
    \exp \left( \ln(1-\delta) + \delta \right)^{\epsilon f + (1-\epsilon)r}
    \\&<
    \exp \left( - \delta - \frac{\delta^2}{2} + \delta \right)^{\epsilon f + (1-\epsilon)r}
    &&\text{by Lemma~\ref{lem-ln-neg}}
    \\&=
    \exp \left( \frac{ - (\epsilon (f - r))^2 }{2 (\epsilon f + (1-\epsilon)r)^2 } \right)^{\epsilon f + (1-\epsilon)r}
    &&=
    \exp \left( \frac{ - (\epsilon (f - r))^2 }{2 (\epsilon f + (1-\epsilon)r) } \right)
    \\&<
    \exp \left( \frac{ - (\epsilon (f - r))^2 }{2 (\epsilon f + (1-\epsilon)f) } \right)
    &&=
    \exp \left( \frac{ - (\epsilon (f - r))^2 }{2 f} \right).
    \addtag\label{ineq-R}
  \end{align*}

  Observe that $\gamma \lambda - \gamma' \lambda' = f - r - 2 \epsilon (f-r) = (1 - 2 \epsilon) (f-r)$.
  By \eqref{ineq-F}, \eqref{ineq-R}, and the union bound,
  $
    \Pr[\bF - \bR \leq (1 - 2 \epsilon) (f-r)]
    <
    2 \cdot \exp \left( - \frac{\hat{\epsilon}^2 (f - r)^2}{2 f} \right).
  $
  Substituting the definitions $f=\hf t$, $r=\hr t$, and $\epsilon = \hat{\epsilon}/2$ completes the proof.
\end{proof}

\subsection{Chernoff bound for random walk on $\N$ with state-dependent reverse bias}

Let $N \in \Z^+$, let $\delta_f,\lambda_r > 0$, and let $\bW^N_{\delta_f,\lambda_r}(t)$ for $t > 0$ be a continuous-time Markov process with state set $\N$ in which $\bW^N_{\delta_f,\lambda_r}(0) = 0$, with transitions from state $i$ to state $i+1$ with rate $\delta_f N$ for all $i\in\N$, and with transitions from state $i+1$ to $i$ with rate $\lambda_r i$ for all $i\in\N$.
In other words, $\bW^N_{\delta_f,\lambda_r}$ is a continuous time random walk on $\N$ with a reflecting barrier at 0, in which the rate of going forward is a constant $\delta_f N$, and the rate of going in reverse from state $i$ is proportional to $i$ (with constant of proportionality $\lambda_r$).


\begin{lem}\label{lem-random-walk-reflecting}
  Let $\lambda_r \geq 1$, let $\delta_f,\delta_r > 0$ such that $\delta_r \leq \frac{\delta_f}{4 \lambda_r}$.
  Then for all $N\in\Z^+$ such that $N \geq 6 / (\delta_f)$,
  $$
    \Pr \left[ \max_{\hatt \in [0,1]} \bW^N_{\delta_f,\lambda_r}(\hatt) < \delta_r N \right]
    <
    2^{-\delta_f N / 22 + 1}.
  $$
\end{lem}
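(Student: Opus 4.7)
The plan is to reduce the state-dependent reflecting walk to a pair of Poisson Chernoff bounds by stopping the walk the first time it reaches $\delta_r N$ and exploiting the smallness of the reverse rate below this threshold. Let $\sigma = \inf\{t \geq 0 : \bW^N_{\delta_f,\lambda_r}(t) \geq \delta_r N\}$, so the event of interest is $\{\sigma > 1\}$. For $t < \sigma$, the reverse rate $\lambda_r \bW^N_{\delta_f,\lambda_r}(t)$ is at most $\lambda_r \delta_r N \leq \delta_f N/4$ by the hypothesis $\delta_r \leq \delta_f/(4\lambda_r)$, while the forward rate is the constant $\delta_f N$ throughout.

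First I would introduce two independent Poisson processes $F$ and $\tilde R$ of rates $\delta_f N$ and $\delta_f N/4$, respectively, and realize $\bW^N_{\delta_f,\lambda_r}$ by identifying its forward transitions with the events of $F$ and obtaining its reverse transitions by thinning $\tilde R$ with acceptance probability $4\lambda_r \bW^N_{\delta_f,\lambda_r}(t^-)/(\delta_f N)$; this acceptance probability lies in $[0,1]$ precisely while $\bW^N_{\delta_f,\lambda_r}(t^-) \leq \delta_f N/(4\lambda_r)$, so the thinning is well defined on $[0,\sigma)$. On the event $\{\sigma > 1\}$, every reverse jump of $\bW^N_{\delta_f,\lambda_r}$ in $[0,1]$ corresponds to an event of $\tilde R$, so the reverse count $R(1)$ satisfies $R(1) \leq \tilde R(1)$, and since $\bW^N_{\delta_f,\lambda_r}(1) = F(1) - R(1) < \delta_r N$ on this event I would conclude
\[
  \Pr[\sigma > 1] \leq \Pr[F(1) - \tilde R(1) < \delta_r N].
\]

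Next I would split the right-hand side at the threshold $3\delta_f N/4$,
\[
  \Pr[F(1) - \tilde R(1) < \delta_r N] \leq \Pr\!\left[F(1) \leq \tfrac{3\delta_f N}{4}\right] + \Pr\!\left[\tilde R(1) \geq \tfrac{3\delta_f N}{4} - \delta_r N\right],
\]
and use $\delta_r N \leq \delta_f N/4$ to replace the second threshold by $\delta_f N/2$. Corollary~\ref{cor-poisson-chernoff-gamma-leq} applied to $F(1) \sim \mathrm{Poisson}(\delta_f N)$ with $\gamma = 3/4$ bounds the first term by $((4/3)e^{-1/3})^{3\delta_f N/4}$, and Corollary~\ref{cor-poisson-chernoff-gamma-geq} applied to $\tilde R(1) \sim \mathrm{Poisson}(\delta_f N/4)$ with $\gamma = 2$ bounds the second by $(e^{1/2}/2)^{\delta_f N/2}$. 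A direct logarithmic computation shows $((4/3)e^{-1/3})^{3/4} < 2^{-1/22}$ and $(e^{1/2}/2)^{1/2} < 2^{-1/22}$, so each term is at most $2^{-\delta_f N/22}$ and their sum is at most $2^{-\delta_f N/22 + 1}$. The hypothesis $N \geq 6/\delta_f$ guarantees $\delta_f N \geq 6$, so the relevant Poisson thresholds are positive integers and Theorem~\ref{thm-poisson-chernoff} applies (up to a harmless floor that does not affect the leading constant).

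The main obstacle is the coupling step: one must check that the Poisson thinning can be defined on the entire interval $[0,\sigma)$, which is exactly where the hypothesis $\delta_r \leq \delta_f/(4\lambda_r)$ is used, and verify that the stochastic domination $R(1) \leq \tilde R(1)$ propagates through the max-event $\{\sigma > 1\}$ to yield the reduction above (rather than, say, only giving a bound on $\Pr[\bW^N_{\delta_f,\lambda_r}(1) < \delta_r N]$). Once the reduction is in place the remainder is a routine two-term Chernoff calculation with explicit numerical constants.
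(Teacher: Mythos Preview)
Your proposal is correct and follows the same overall strategy as the paper: reduce to a constant-reverse-rate comparison by observing that on the event $\{\sigma>1\}$ the reverse rate never exceeds $\lambda_r\delta_r N\le \delta_f N/4$, then bound the resulting difference of Poissons. Your thinning construction makes the coupling explicit, whereas the paper states the comparison with $\bU_{\delta_f N,\delta_f N/4}$ informally.

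The one genuine difference is in how the Poisson-difference tail is controlled. The paper invokes Lemma~\ref{lem-random-walk-f-r} (the general biased-walk Chernoff bound) with $\hat\epsilon=2/3$, obtaining the exponent $\delta_f N/32$ and then converting to $2^{-\delta_f N/22}$; the proof of Lemma~\ref{lem-random-walk-f-r} in turn passes through the logarithm inequalities of Lemmas~\ref{lem-ln-pos} and~\ref{lem-ln-neg}. You instead split at the threshold $3\delta_f N/4$ and apply Corollaries~\ref{cor-poisson-chernoff-gamma-leq} and~\ref{cor-poisson-chernoff-gamma-geq} directly with $\gamma=3/4$ and $\gamma=2$, then check the numerical constants by hand. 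Your route is shorter and entirely avoids the $\ln(1\pm\delta)$ lemmas; the paper's route factors through a reusable intermediate lemma on biased walks over~$\Z$. Both land on the same bound $2^{-\delta_f N/22+1}$.
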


\begin{proof}
  Consider the event that $(\forall \hatt \in [0,1])\ \bW^N_{\delta_f,\lambda_r}(\hatt) < \delta_r N$.
  Then in this case, the maximum reverse transition rate is at most $\lambda_r \delta_r N \leq \delta_f N / 4$.

  For $\hatt \in [0,1]$, consider the random walk $\bU_{\delta_f N,\delta_f N / 4}(\hatt)$ defined as in Lemma~\ref{lem-random-walk-f-r} as a Markov process on $\Z$ (i.e., the states are allowed to go negative) in which the forward rate from any state $i\in\Z$ is $\delta_f N$ as in $\bW^N_{\delta_f,\lambda_r}(\hatt)$, but the reverse rate from any state $i\in\Z$ is $\delta_f N / 4$, which is an upper bound on the reverse rate of $\bW^N_{\delta_f,\lambda_r}(\hatt)$ from any state $i \in \{1,\ldots,\delta_r N \}$.

  Therefore $\Pr[(\forall \hatt \in [0,1])\ \bW^N_{\delta_f,\lambda_r}(\hatt) < \delta_r N] < \Pr[(\forall \hatt \in [0,1])\ \bU_{\delta_f N,\delta_f N / 4}(\hatt) < \delta_r N],$ since $\bU_{\delta_f N,\delta_f N / 4}$ has a strictly higher reverse rate and does not have a reflecting barrier at state $i=0$, hence is strictly less likely never to reach the state $\delta_r N$ at any time $\hatt \in [0,1]$.

  We prove the theorem by bounding $\Pr[\bU_{\delta_f,\delta_f N / 4}(1) < \delta_r N]$.

  Lemma~\ref{lem-random-walk-f-r}, with $\hat{\epsilon} = \frac{2}{3}$, $t=1$, $\hf = \delta_f N$ and $\hr = \delta_f N / 4$ implies that
  \begin{align*}
    \Pr \left[ \bU_{\delta_f N,\delta_f N / 4}(1) < \frac{\delta_f N}{4} \right]
    &=
    \Pr \left[ \bU_{\delta_f N,\delta_f N / 4}(1) < \left(1 - \frac{2}{3} \right) (\delta_f N - \delta_f N / 4) \right]
    \\&<
    2 e^{- \frac{\left(\frac{2}{3}\right)^2 (\delta_f N - \delta_f N / 4)^2}{8 \delta_f N} } \ \ \ \ \text{by Lemma~\ref{lem-random-walk-f-r}}
    \\&=
    2 e^{- \frac{(3 \delta_f / 4)^2}{18 \delta_f} N}
    =
    2 e^{- \delta_f N / 32}.
  \end{align*}
  Note that $e^{-n/32} = 2^{-n/(32 \ln 2)} < 2^{-n/22}$ for all $n > 0$.
  Therefore,
  $$
    \Pr \left[ \bU_{\delta_f N,\delta_f N / 4}(1) < \frac{\delta_f N}{4} \right]
    <
    2 \cdot 2^{-\delta_f N / 22}
    =
    2^{-\delta_f N / 22 + 1}.
  $$
  Since $\delta_f \geq 4 \lambda_r \delta_r$ and $\lambda_r \geq 1$,
  $
    \delta_f N / 4
    \geq
    \lambda_r \delta_r N
    \geq
    \delta_r N,
  $
  so
  $$
    \Pr \left[ \bU_{\delta_f N,\delta_f N / 4}(1) < \delta_r N \right]
    \leq
    2^{-\delta_f N / 22 + 1}.
  $$
  By our observation that $\Pr[(\forall \hatt \in [0,1])\ \bW^N_{\delta_f,\lambda_r}(\hatt) < \delta_r N] < \Pr[(\forall \hatt \in [0,1])\ \bU_{\delta_f N,\delta_f N / 4}(\hatt) < \delta_r N],$ and since $\bU_{\delta_f N,\delta_f N / 4}(1) \geq \delta_r N$ is a counterexample to the latter event with $\hatt=1$,
  $$
    \Pr \left[ \max_{\hatt \in [0,1]} \bW^N_{\delta_f,\lambda_r}(\hatt) < \delta_r N \right]
    \leq
    2^{-\delta_f N / 22 + 1},
  $$
  completing the proof.
\end{proof}

\end{document}